\definecolor{dark-red}{rgb}{0.4,0.15,0.15}
\definecolor{dark-blue}{rgb}{0.15,0.15,0.75}
\definecolor{medium-blue}{rgb}{0,0,0.5}
\def\reducespace{\vspace{-.1in}}
  \renewcommand\@seccntformat[1]{\csname the#1\endcsname.{\hskip.7em\relax}} 
\newcommand{\appendixref}[1]{\hyperref[#1]{Appendix \ref{#1}}}
\newcommand{\onlineappendixref}[1]{\hyperref[#1]{Supplementary Appendix \ref{#1}}}
\newtheorem{lemma}{Lemma}
\newtheorem{claim}{Claim}
\newtheorem{proposition}{Proposition}
\theoremstyle{remark}
\newtheorem{remark}{Remark}
\newtheorem{fact}{Fact}
\theoremstyle{definition}
\titlespacing\section{0pt}{10pt plus 2pt minus 2pt}{4pt plus 2pt minus 2pt} 
\titlespacing\subsection{0pt}{6pt plus 2pt minus 2pt}{2pt plus 2pt minus 2pt} 
\titlespacing\subsubsection{0pt}{6pt plus 2pt minus 2pt}{0pt plus 2pt minus 2pt} 
\def\o{\overline}
\def\u{\underline}
\def\E{\mathbb{E}}
\def\Var{\ensuremath{\mathrm{Var}}}
\def\Cov{\ensuremath{\mathrm{Cov}}}
\def\Corr{\ensuremath{\mathrm{Corr}}}
\def\Reals{\mathbb R}
\newcommand{\cov}{\mathrm{Cov}}
\newcommand{\var}{\mathrm{Var}}
\newcommand{\vareta}{\sigma^2_\eta}
\newcommand{\vargamma}{\sigma^2_\gamma}
\newcommand{\loss}{L}
\newcommand{\welfloss}{\mathcal L}
\newcommand{\welfareloss}{\mathrm{Welfare \ Loss}}
\newcommand{\bn}{\beta^{\mathrm{n}}}
\newcommand{\bstar}{\beta^*}
\newcommand{\bstarprime}{\beta^{*\prime}}
\newcommand{\bstarprimeprime}{\beta^{*\prime\prime}}
\newcommand{\hstar}{h^*}
\newcommand{\hstarprime}{h^{*\prime}}
\newcommand{\beq}{\beta^\mathrm{fp}}
\renewcommand{\epsilon}{\varepsilon}
\renewcommand{\phi}{\varphi}
\renewcommand{\bar}{\overline}
\DeclareMathOperator*{\argmin}{arg\,min}
\DeclareMathOperator{\sign}{sign}
\newcommand{\mailto}[1]{\href{mailto:#1}{\texttt{#1}}} 
\def\bi{\begin{itemize}}
\def\ei{\end{itemize}}
\let\oldfootnote\footnote
\renewcommand\footnote[1]{\oldfootnote{\hspace{.4mm}#1}}
\renewenvironment{proof}[1][\proofname] {\par\pushQED{\qed}\normalfont\topsep6\p@\@plus6\p@\relax\trivlist\item[\hskip\labelsep\bfseries#1\@addpunct{.}]\ignorespaces}{\popQED\endtrivlist\@endpefalse}
\begin{document}
\onehalfspacing

\begin{titlepage}

\title{Improving Information from Manipulable Data\thanks{We thank Ian Ball, Ralph Boleslavsky, Max Farrell, Pepe Montiel Olea, Canice Prendergast, Robert Topel, and various seminar and conference audiences for helpful comments. C\'esar Barilla, Bruno Furtado, and Suneil Parimoo provided excellent research assistance.}}
\author{Alex Frankel\thanks{University of Chicago Booth School of Business; \mailto{afrankel@chicagobooth.edu}.} \and Navin Kartik\thanks{Columbia University, Department of Economics; \mailto{nkartik@columbia.edu}.}}

\maketitle

\thispagestyle{empty}

{

\begin{abstract}
\noindent Data-based decisionmaking must account for the manipulation of data by agents who are aware of how decisions are being made and want to affect their allocations. We study a framework in which, due to such  manipulation, data becomes less informative when decisions depend more strongly on data. We formalize why and how a decisionmaker should commit to underutilizing data. Doing so attenuates information loss and thereby improves allocation accuracy. 
\end{abstract}
}

\bigskip

\quad \emph{JEL Classification:} C72; D40; D82 

\quad \emph{Keywords:} Gaming; Goodhart's Law; Strategic Classification

\bigskip

\end{titlepage}

\newpage 

\setcounter{page}{1}
\onehalfspacing 
\normalsize
\setlength{\parskip}{6pt plus 1pt minus 1pt}

\section{Introduction}
Firms use a consumer's web browsing history to price discriminate and to target ads. Banks rely on a prospective borrower's credit score to make lending decisions. Search engines take as input a website's text and metadata to produce search results. In these settings and many others, an agent (consumer, borrower, website) generates data that is then used by the designer (firm, bank, search engine) to provide an allocation (prices/ads, interest rates, search rankings). Agents who understand the designer's algorithm can alter their behavior to receive a more desirable allocation. For instance, consumers can adjust browsing behavior to mimic those with low willingness to pay; borrowers can open or close accounts to improve their credit score; and websites can perform search engine optimization. How should a designer maximize allocation accuracy when accounting for the resulting manipulation?

As a benchmark, consider a naive designer who is unaware of the potential for manipulation. Before implementing an allocation rule, the designer gathers data generated by agents and estimates their types (the relevant characteristics). The \emph{naive allocation rule} assigns each agent the allocation that is optimal according to this estimate. But after the rule is implemented, agents' behavior changes: if agents with ``higher observables'' $x$ receive a ``higher allocation'' $y$ under the allocation rule $Y(x)$, and if agents prefer higher allocations, then some agents will find ways to game the rule by increasing their $x$. In line with Goodhart's Law, the original estimation is no longer accurate.\footnote{Goodhart's Law, often rephrased as ``When a measure becomes a target, it ceases to be a good measure,'' was originally stated by \citet{goodhart75} as ``Any observed statistical regularity will tend to collapse once pressure is placed upon it for control purposes.'' In our context ``control purposes'' would correspond to the designer's use of the estimation for allocation decisions.}

A more sophisticated designer realizes that behavior has changed, gathers new data, and re-estimates the relationship between observables and type. After the designer updates the allocation rule based on the new prediction, agent behavior changes once again. The designer might keep adjusting the rule until she reaches a \emph{fixed point}: an allocation rule that is a best response to the data that is generated under this very rule. But the resulting allocation need not match the desired agent characteristics  well.

In this paper we compare the allocation rule chosen by a designer with commitment power---the Stackelberg solution---to the fixed-point allocation rule. We find that in order to improve the accuracy of allocations, a designer should make the allocation rule less sensitive to manipulable data than under the fixed point. In other words, the designer should ``flatten'' the allocation rule. Flattening the allocation results in ex-post suboptimality; the designer has committed to ``underutilizing'' agents' data. Fixed-point allocations, by contrast, are ex-post optimal. However, a flatter allocation rule reduces manipulation, which makes the data more informative about agents' types. Allocation accuracy improves on balance. We develop and explore this logic in what we believe is a compelling model of information loss due to manipulation.

By way of background, note that in some environments, manipulation does not lead to information loss: fixed-point rules deliver the designer's full-information outcome. To see this, think of a fixed-point rule as corresponding to the designer's equilibrium strategy in a signaling game in which the designer and agent best respond to each other. Under a standard single-crossing condition \`a la \citet{Spence73}---the designer wants to give more desirable allocations to agents with higher types, and higher types have lower marginal costs of taking higher observable actions---this signaling game has a fully separating equilibrium, i.e., one in which the designer perfectly matches the agent's allocation to her type. Even with commitment power, a designer cannot improve accuracy by departing from the corresponding allocation rule.

To introduce information loss, we build on a framework first presented by \cite{prendergast1996favoritism}.  The designer learns about an agent's type by observing data the agent generates, her action $x \in \mathbb{R}$. Agents are heterogeneous on two dimensions of their types, what we call  \emph{natural action} and \emph{gaming ability}.  We  initially assume the designer is only interested in the natural action $\eta \in \Reals$, which determines the agent's action $x$  absent any manipulation. Gaming ability $\gamma\in \Reals$ summarizes how much an agent manipulates $x$ in response to incentives. For instance,  in the web search application, $x$ represents all that a search engine sees about a website, $\eta$ the fundamental relevance of a website to a given online query, and $\gamma$ how costly it is for a website's owners to engage in search engine optimization, or how willing they are to do that.

When drawing inferences from the action $x$, the designer's information about the agent's natural action $\eta$ is ``muddled'' with that about gaming ability $\gamma$ \citep{FK19}. We assume the designer observes $x$ and chooses an  allocation $y = Y(x)\in \Reals$ with the goal of minimizing the quadratic distance between $y$ and $\eta$.  We restrict attention to linear allocation rules or policies $Y(x)=\beta x + \beta_0$, and we posit (with microfoundations) that agents adjust their observable $x$ in proportion to $\gamma \beta$---their gaming ability times the sensitivity of allocations to observables.\footnote{As is common, we say ``linear'' instead of the mathematically more precise ``affine''.} These linear functional forms arise in the linear-quadratic signaling models of \citet{FV00} and \citet{BT06}, among others.

Our main result establishes that the optimal policy under commitment is less sensitive to observables than is the fixed-point policy. 
Mathematically, for policies of the form $Y(x) = \beta x +\beta_0$, we find that it is optimal for the designer to attenuate the  fixed point's coefficient $\beta>0$ towards zero. (For this discussion, suppose there is a unique fixed point, for which $\beta>0$; our formal analysis addresses the possibility of multiple or negative fixed points.)
Information is underutilized  at the optimum in the sense that, given the data generated by agents in response to this optimal policy, the designer would ex-post benefit from using a higher $\beta$. For instance, suppose the sensitivity of the naive policy is $\beta=1$: when the designer does not condition the allocation on observables, the linear regression coefficient of type $\eta$ on observable $x$ is 1, and the naive designer responds by matching her allocation rule's sensitivity to this regression coefficient. The fixed-point policy may have $\beta=0.7$. That is, when the designer sets $\beta=0.7$ and runs a linear regression of $\eta $ on $x$ using data generated by the agent in response to $\beta=0.7$,  the regression coefficient is the same $0.7$. Our result is that the optimal policy has $\beta \in (0,0.7)$, say $\beta=0.6$. After the designer sets $\beta=0.6$, however, the corresponding linear regression coefficient is larger than $0.6$, say $0.75$. We emphasize that our argument for shrinking regression coefficients is driven by the informational benefit from reduced manipulation, and in turn, the resulting improvement in allocations. It is orthogonal to concerns about model overfitting.

In comparing our commitment solution with the fixed-point benchmark, it is helpful to keep in mind two distinct interpretations of the fixed point. The first concerns a designer who has market power in the sense that agents adjust their manipulation behavior in response to this designer's policies. Think of websites engaging in search engine optimization 
to specifically improve their {Google} rankings; third party sellers paying for fake reviews on the {Amazon} platform; or citizens trying to game an eligibility rule for a targeted {government} policy. In these cases the designer may settle on a fixed point by adjusting policies until reaching an ex-post optimum. Our paper highlights that this fixed point may yet be suboptimal ex ante, and offers the prescriptive advice of flattening the allocation rule.

A second perspective is that the fixed-point policy represents the outcome of a competitive market.  With many banks, any one bank that uses credit information in an ex-post suboptimal manner will simply be putting itself at a disadvantage to its competitors; similarly for colleges using SAT scores for admissions. So the fixed point becomes a descriptive prediction of the market outcome, i.e., the equilibrium of a signaling game. In that case, our optimal policy suggests a government intervention to improve allocations, or a direction that collusion might take.


Before turning to the related literature, we stress three points about our approach. First, our paper aims to formalize a precise but ultimately qualitative point, and make salient its logic. Our model is deliberately stylized and, we believe, broadly relevant for many applications. But it is not intended to capture the details on any specific one. We hope that it will be useful for particular applications either as a building block or even simply as a benchmark for thinking about positive and normative implications. Second, we view our main result---the commitment policy flattens fixed points and underutilizes data---as intuitive once one understands the logic of our environment. Indeed, there is a simple first-order gain vs.~second-order loss intuition for a local improvement from flattening a fixed point; see \autoref{lem:lossderivbeq} and the discussion after \autoref{prop:main}. Confirming that the result holds for the global optimum is not straightforward, however; among other complications,  the designer's problem is not concave and, separately, there can be multiple fixed points. Third, our main result does, of course, depend on certain important modeling assumptions. We emphasize the result because we find the assumptions compelling. \autoref{sec:extensions} discusses extensions and limitations, including how the result changes under other assumptions.

\paragraph{Related Literature.}

There are many settings in economics in which a designer commits to making ex-post suboptimal allocations in order to improve ex-ante incentives on some dimension. 
Our specific interest in this paper is in a  canonical problem of matching allocations to unobservables in the presence of strategic manipulation. In this context, we study a simple model in which there is a benefit of  committing to distortions in order to improve the ex-ante accuracy of the allocations.

%
%
%
%
%
%
Building on the ``linear-quadratic-normal'' signaling games of \citet{FV00} and \citet{BT06}, \citet{FK19} elucidate general conditions under which an agent's action becomes less informative to an observer when the agent has stronger incentives to manipulate.  
\citet{FK19} model an observer in reduced form: the agent's payoff is assumed to depend directly on the observer's belief. 
In the current paper, we introduce an explicit accuracy objective for the observer/designer. This allows us to consider commitment power for the designer. We compare the commitment optimum with the fixed point, where fixed points correspond to equilibria in the aforementioned signaling-game papers. The key tradeoff our designer faces is suggested by those papers, and also by \citet{prendergast1996favoritism}: making allocations more responsive to an agent's data amplifies  the agent's manipulation, which makes the data less informative.





Perhaps the most related paper to ours is the contemporaneous work of \citet{ball2020}. He extends the linear-quadratic-elliptical specification  in Section IV of \citet{FK19} to incorporate multiple ``features'' or dimensions; on each feature, agents have heterogeneous natural actions and gaming abilities. His main focus is on optimal scoring rules to improve information, specifically in identifying how to weight the different features when aggregating them into a one-dimensional statistic.\footnote{He interprets the aggregator as produced by an intermediary who shares the decisionmaker's interests, but cannot control the decisionmaker's behavior. That is, the intermediary can commit to the aggregation rule but allocations are made optimally given the aggregation. For work on optimal garbling of signals in other sender-receiver games, see \citet{Whitmeyer20} and references therein.} He also compares his analog of our commitment solution with both his scoring and fixed-point solutions. Similar to   our \autoref{prop:main}, he finds that under certain conditions, his commitment solution is less responsive to all of an agent's features than the (unique, under his assumptions) fixed-point solution.   He does not study the issues tackled by our Propositions \ref{prop:compstats}--\ref{prop:hybrid}.

\cite{BBK20} present a multiple-features model similar to \citet{ball2020}. Like us, they are interested in the commitment solution. Their emphasis, however, is on empirical estimation; they demonstrate their estimator's value using a field experiment.\footnote{\citet{HG20} discuss how to adjust penalized regressions and some other procedures to account for certain kinds of strategic manipulation. The bulk (although not all) of their analysis is in a framework in which manipulation, once properly accounted for, does not entail information loss.}

At a very broad level, our main result that the designer should flatten allocations relative to the fixed-point rule is reminiscent of the ``downward distortion'' of allocations in screening problems following \citet{MR78}. That said, our framework, analysis, and emphasis---on manipulation and information loss, allocation accuracy, contrasting commitment with fixed points---are not readily comparable with that literature. One recent paper on screening to highlight is \cite{bonatti2019consumer}. In a dynamic price discrimination problem, they  show that short-lived firms get better information about long-lived consumers' types---resulting in higher steady-state profits---if a designer reveals a statistic that underweights recent consumer behavior. Suitable underweighting dampens consumer incentives to manipulate demand.

A finance literature addresses the difficulty of using market activity to learn fundamentals when participants have manipulation incentives. Again in models very different from ours, some papers highlight benefits of committing to underutilizing information. 
  See, for example, \citet{bond2015government} and \citet{boleslavsky2017selloffs}. These authors study trading in the shadow of a policymaker who may intervene after observing prices or order flows. 
The anticipation of intervention makes the financial market less informative about a fundamental 
to which the intervention should be tailored.
Both papers establish that the policymaker may benefit from a commitment that, in some sense, entails 
underutilization of information. In particular, \citet[Proposition 2]{bond2015government} highlight a local first-order information benefit vs.~second-order allocation loss akin to our \autoref{lem:lossderivbeq}. Unlike us, they do not study global optimality.

A number of papers in economics study the design of testing regimes and other instruments to improve information extraction. Recent examples include \citet{harbaugh2018coarse} on pooling test outcomes to improve voluntary participation, \citet{perez2017test} on the benefits of noisy tests when agents can manipulate the test, and \citet{MGO19} on using ``conservative'' (or ``confirmatory'') thresholds to mitigate manipulation. \cite{JS20}, \citet{AB20}, and \cite{FK19} analyze how hiding information about agents' actions---increasing privacy---can improve information about their characteristics.\footnote{\citet{EliazSpiegler19} explore the distinct issue of an agent's incentives to reveal her own data to a ``non-Bayesian statistician'' making predictions about her.} 

Beyond economics, our paper connects to a recent computer science literature studying classification algorithms in the presence of strategic manipulation. See, among others, \cite{HMPW16}, \citet{HIV19}, \cite{milli2018social}, and \citet{kleinberg2019classifiers}. In a binary strategic classification problem, \citet{BravermanGarg19} argue for random allocations to improve allocation accuracy and reduce manipulation costs. 

We would like to reiterate that our designer is only interested in allocation accuracy, not directly the costs of manipulation. Moreover, unlike \citet{kleinberg2019classifiers}, we model an agent's manipulation effort as pure ``gaming'': it does not provide desirable output or affect the designer's preferred allocation. 
By contrast to us, 
principal-agent problems in economics often focus on how allocation rules interact with incentives for desirable effort. 
For instance, \cite{prendergast1996favoritism} study contracts in which incentivizing worker effort provides a firm worse information about the worker's match quality because of an intermediary's favoritism.  In a multitasking environment, \cite{EHM18} study how randomized rewards schemes can reduce gaming and improve effort. 
\cite{liangmadsen} show that a principal might
strengthen an 
agent's effort incentives by committing to disregard predictive data acquired from other agents; the benefit can dominate the cost of making less accurate predictions.


\section{Model} \label{sec:model}

\subsection{The Environment}
An agent has a type $(\eta, \gamma)\in \mathbb{R}^2$ drawn from  joint distribution $F$. It may be helpful to remember the mnemonics $\eta$ for \emph{n}atural action, and $\gamma$ for \emph{g}aming ability; see \autoref{sec:linassms}. Assume the variances $\var(\eta) = \sigma^2_\eta$ and $\var(\gamma)=\sigma^2_\gamma$ are positive and finite.\footnote{Throughout, we use `positive' 
 to mean `strictly positive', and similarly for `negative', `larger', and `smaller'.} Denote the means of $\eta$ and $\gamma$ by $\mu_\eta$ and $\mu_\gamma$, respectively, and assume their correlation is $\rho \in (-1,1)$, with $\rho = \cov(\eta, \gamma)/(\sigma_\eta \sigma_\gamma)$.

A designer seeks to match an allocation $y \in \mathbb{R}$ to $\eta$, with a quadratic loss of $
(y - \eta)^2.$
The designer chooses $y=Y(x)$ as a function of an observed action $x \in \mathbb{R}$ that is chosen by the agent.
Thus, the designer's welfare loss is
\begin{align}
\label{eq:loss}
\welfareloss \equiv \mathbb{E}[(Y(x) - \eta)^2] .
\end{align}
The agent chooses $x$ as a function of her type $(\eta,\gamma)$ after observing the allocation rule $Y$. In a manner detailed later, the agent will have an incentive to choose a higher $x$ to obtain a higher $y$.
Given a strategy of the agent, the designer can compute the distribution of $x$ and the value of $\E[\eta|x]$ for any $x$ the agent may choose. A standard decomposition\footnote{\label{fn:decomparg}
The right-hand sides of \eqref{eq:loss} and \eqref{eq:twoparts} are equal if
\[\mathbb{E}[(Y(x))^2 - 2 \eta Y(x) + \eta^2] = \mathbb{E} \big [ \eta^2 - 2 \eta \mathbb{E}[\eta|x] + (\mathbb{E}[\eta|x])^2 + (Y(x))^2 - 2 Y(x) \mathbb{E}[\eta|x] + (\mathbb{E}[\eta|x])^2\big ] .\]
Canceling out like terms and rearranging, it suffices to show that
\[ 2 \mathbb{E}\big [(\mathbb{E}[\eta|x] -  \eta )Y(x) \big ] = 2\mathbb{E} \big [  (\mathbb{E}[\eta|x]-\eta)\mathbb{E}[\eta|x]  \big ]. \]
This  equality holds by the  orthogonality condition $\mathbb{E}[(\mathbb{E}[\eta|x]-\eta) g(x)]=0$ for all functions $g(x)$.} is
\begin{align}
\welfareloss &= \underbrace{\mathbb{E}[(\mathbb{E}[\eta|x]-\eta)^2]}_{\text{Info loss from estimating $\eta$ using $x$}} + \quad \underbrace{\mathbb{E}[(Y(x) - \mathbb{E}[\eta|x])^2]}_{\text{Misallocation loss given estimation}}. \label{eq:twoparts}
\end{align}
Holding fixed the agent's strategy, it is ``ex-post optimal'' for the designer to set $Y(x)=\E[\eta|x]$. However, the agent's strategy responds to $Y$. So the designer may prefer to use an ex-post suboptimal allocation rule to improve her estimation of $\eta$ from $x$, as seen in the first term of \eqref{eq:twoparts}. That is, the designer may benefit from the power to commit to her allocation rule.

\subsection{Linearity Assumptions}
\label{sec:linassms}

Assume the designer chooses among {linear allocation rules}: the designer chooses policy  parameters $(\beta, \beta_0)\in \Reals^2$ such that
\begin{align}
Y(x) = \beta x + \beta_0. \label{eq:linpolicy}
\end{align}
Also assume that, given the designer's policy ($\beta,\beta_0$), the agent chooses $x$ using a linear strategy $X_\beta(\eta,\gamma)$ that takes the form
\begin{align}
X_\beta(\eta,\gamma) = \eta + m \beta \gamma \label{eq:linresponse}
\end{align}
for some exogenous parameter $m > 0 $.  Thus $\eta$ is the agent's ``natural action'': the action taken when the designer's policy does not depend on $x$ (i.e., $\beta=0$). The variable $\gamma$ represents idiosyncratic responsiveness to the designer's policy: 
a higher $\gamma$ increases the agent's action from the natural level by more for any $\beta>0$. 
The parameter $m$ captures a common component of responsiveness across all agent types.

One can view the agent's strategy in \autoref{eq:linresponse} as a direct behavioral assumption. But it can be microfounded with a number of agent objectives. To begin with, it is the best response for an agent with $\gamma>0$ who maximizes

\begin{equation}
\label{eq:agentpayoff}
y - (x-\eta)^2/(2m\gamma).\footnote{Substituting in $y=\beta x + \beta_0$, the agent's first-order condition is $\beta -(x-\eta)/(m\gamma)=0$, which implies \autoref{eq:linresponse}. The first-order condition is sufficient because $\gamma>0$.}
\end{equation}
Here we refer to $\gamma$ as an agent's idiosyncratic \emph{gaming ability}. A higher gaming ability scales down the linear marginal cost of taking actions above the natural action. The parameter $m$ captures the ``manipulability'' of the action $x$; a higher $m$ scales down marginal costs for all agent types.

A related interpretation is that the agent chooses a level $b$ by which to boost her ``baseline output'' $\eta$ at cost $b^2/(2m\gamma)$, generating output $x=\eta+b$. The agent knows her cost parameter $\gamma$  when choosing $b$ (and may or may not know $\eta$).\footnote{\citet{LN18} study a signaling model in this vein.}

Alternatively, with the change of variables $e\equiv (x-\eta)/\sqrt{\gamma}$, the agent's payoff \eqref{eq:agentpayoff} can be rewritten as $y-e^2/(2m)$. The setting is thus isomorphic to one in which the agent chooses ``effort'' $e$ at a type-independent cost $e^2/(2m)$ and the designer observes an outcome $x = \eta + e\sqrt{\gamma}$. Here, $\eta$ can be interpreted as the agent's baseline talent while $\gamma$ parameterizes her \emph{marginal product of effort}. When $y=\beta x +\beta_0$, the agent optimally chooses $e=m\beta\sqrt{\gamma}$, as per \autoref{eq:linresponse}.

Finally, the strategy in \autoref{eq:linresponse} can also be motivated as the best response for an agent who maximizes a utility of $$m \gamma y - (x-\eta)^2/2.$$ Here $\gamma \in \Reals$ is an idiosyncratic \emph{marginal benefit} of obtaining a higher allocation $y$. The parameter $m$ captures the ``stakes'' common to all agent types.

\begin{remark}
The signaling specification in Section IV of \citet{FK19}, and those in predecessors cited therein, also model an agent behaving as per \autoref{eq:linresponse}, using the aforementioned microfoundations. They add distributional assumptions on the agent's type that lead to linear allocation rules (or belief updating, in the signaling context), whereas we assume \autoref{eq:linpolicy} directly. We study linear allocation rules for their simplicity, tractability, and comparability; see \autoref{sec:discussnonlinear} for further discussion.
\end{remark}

\subsection{The Designer's Problem}
\label{sec:objective}

The designer commits to her policy $(\beta,\beta_0)$, which the agent observes and responds to according to \eqref{eq:linresponse}. Plugging the rule \eqref{eq:linpolicy} and the strategy \eqref{eq:linresponse} into the welfare loss function \eqref{eq:loss} yields
\begin{align*}\welfareloss&=\mathbb{E}[(\beta (\eta + m \beta \gamma) + \beta_0 - \eta)^2].\end{align*}
The designer's problem is therefore to choose $(\beta, \beta_0)$ to minimize the above loss function, which is quartic in $\beta$.\footnote{Using standard mean-variance decompositions, 
$$\welfareloss = (1-\beta)^2 \vareta +m^2 \beta^4 \vargamma - 2 (1-\beta) m \beta^2 \rho \sigma_\eta \sigma_\gamma  + (\beta_0 - (1-\beta)\mu_\eta + m \beta^2 \mu_\gamma)^2.$$}
We denote the solution as $(\beta^*,\beta^*_0)$.

\subsection{Discussion} \label{sec:disc}





Let us review some of the applications mentioned earlier using the lens of our model.

In one application, the designer is an internet search platform and the agent is the administrator of a website. The site's true quality---its relevance or value to people searching for certain keywords---is represented by $\eta$. The action or data $x$  is a statistic based on the text and metadata that the platform scrapes off the site. This data can be manipulated through search engine optimization (SEO), with $\gamma$ representing the administrator's skill at or interest in SEO, or alternatively, the resources the administrator has available. The allocation $y$ is the ultimate search ranking, and the platform seeks to rank better-quality sites higher.

Similarly, the designer may be a sales platform and the agent is a third party seller. There, $\eta$ could be the average rating by genuine users of the product (which proxies for product quality), while the statistic $x$ is the observed rating, which can be manipulated by greedy or unscrupulous sellers (those with high $\gamma$) who pay for fake reviews. Here, negative correlation between $\eta$ and $\gamma$ is plausible: less scrupulous sellers may also tend to cut corners on product quality.

A different application is towards testing, in which a college (the designer) evaluates a student (the agent) based on her test score $x$. The allocation $y$ could represent either the priority ranking for admission or the amount of a merit scholarship; in either case, the student values higher $y$. The student's type $\eta$ is her intrinsic aptitude or general high school preparation, while the type $\gamma$ is her skill in or support available for ``studying to the test''. Here, we might expect $\eta$ and $\gamma$ to be positively correlated: both high school preparation and test-taking resources are aided by better socioeconomic status.

Finally,  a firm (the designer) may be allocating a task of importance $y$ to an employee (the agent). The firm seeks to allocate more important tasks to those with more talent, with the employee's talent given  by $\eta$. Recall the ``marginal product of effort'' interpretation of $\gamma$ from \autoref{sec:linassms}: our model can be interpreted as one in which the firm observes the output $x$ of a previous project, with $x= \eta + e \sqrt{\gamma}$ after the employee puts in effort $e$ at cost $e^2/(2m)$.\footnote{Here we abstract away from any direct concerns the firm has for output $x$ or effort $e$. \autoref{sec:addnobjectives} comments on how incorporating those concerns would affect our results.}

At this point we should highlight that our results in \autoref{sec:analysis} depend on the assumption that the designer seeks to match the type dimension $\eta$ rather than $\gamma$. These variables enter asymmetrically into the agent's behavior in \autoref{eq:linresponse}. As highlighted in \cite{FK19}, information about $\eta$ and $\gamma$ can move in opposite directions when the agent is more strongly incentivized to manipulate her action. In the present context, when the designer's policy puts more weight on the data---when $\beta$ increases---the agent's action $x$ becomes less informative about $\eta$; see a formalization in \autoref{rem:lessinfo} below. But the action may simultaneously become more informative about $\gamma$.

In the testing and task allocation applications, a designer could in fact care about $\gamma$ as well as $\eta$. The ability to study for a test, or to increase the  output of a project, could  be correlated with better performance in future classes or on future tasks. \autoref{sec:hybrid} explores how our main result, \autoref{prop:main}, extends if the designer places a limited amount of weight on matching $\gamma$ relative to matching $\eta$, but can flip if she places too much weight on matching $\gamma$.

\subsection{Preliminaries}
\subsubsection{Linear regression of type $\eta$ on action $x$}
\label{sec:regression}

When the designer uses  policy $(\beta, \beta_0)$,  the agent responds with  strategy $X_{\beta}(\eta,\gamma)=\eta+ m \beta \gamma$.  To understand better the designer's welfare loss, suppose the designer were to gather data under that agent behavior and then estimate the relationship between the dimension of interest $\eta$ and the action $x$. Specifically, let $\hat \eta_{\beta}(x)$ denote the best linear estimator of $\eta$ from $x$ under a quadratic loss objective: $$\hat \eta_{\beta}(x)\equiv \hat \beta(\beta) x+\hat \beta_0(\beta),$$
with $\hat \beta$ and $\hat \beta_0$ the coefficients of an ordinary least squares (OLS) regression of $\eta$ on $x$. Following standard results for simple linear regressions, 
\begin{align}
\label{eq:betahat}
\hat \beta(\beta) & = \frac{\Cov(x,\eta)}{\Var(x)},\\ 
\hat \beta_0(\beta)& =\mu_\eta-\hat \beta(\beta)[\mu_\eta+m\beta\mu_\gamma].\notag
\end{align}
Given the strategy $X_{\beta}$, the covariance of $x$ and $\eta$ is $\Cov(x,\eta)= \vareta + m \rho \sigma_\eta \sigma_\gamma \beta$ and the variance of $x$ is $\Var(x)=\vareta + m^2  \vargamma \beta^2+ 2 m  \rho \sigma_\eta \sigma_\gamma \beta>0$. 

It is useful to further rewrite the welfare loss \eqref{eq:twoparts} as follows, for any policy $(\beta,\beta_0)$ defining the linear allocation rule $Y(x)=\beta x+\beta_0$:\footnote{This derivation is identical to that in \autoref{fn:decomparg}, only replacing $\mathbb{E}[\eta|x]$ by $\hat \eta_\beta(x)$ and applying the orthogonality condition $\mathbb{E}[(\hat \eta_\beta(x) - \eta) g(x)]=0$ for all affine functions $g(x)$.}
\begin{align}
\welfareloss
&= \underbrace{\mathbb{E}[(\hat \eta_{\beta}(x)-\eta)^2]}_{\text{Info loss from linearly estimating $\eta$ using $x$}}  + \underbrace{\mathbb{E}[(Y(x) - \hat \eta_{\beta}(x))^2]}_{\text{Misallocation loss given linear estimation}}. \label{eq:twopartslin}
\end{align}
Some readers may find it helpful to note that information loss from estimation (the first term in \eqref{eq:twopartslin}) is the variance of the residuals in an OLS regression of $\eta$ on $x$; put differently, $\mathbb{E}[(\hat \eta_{\beta}(x)-\eta)^2]=\sigma^2_{\eta}(1-R^2_{x\eta})$, with $R^2_{x \eta}$  the coefficient of determination in that regression. 
\autoref{eq:twopartslin} is a convenient welfare decomposition for linear allocation rules (given the agent's linear response) that is valid because OLS provides the best linear predictor of $\eta$ given $x$.

\begin{remark}
\label{rem:lessinfo}

By a standard property of simple linear regressions, $R^2_{x\eta}$ is the square of the correlation between $x$ and $\eta$: $R^2_{x\eta}=\Corr(x,\eta)^2$. Since $\Corr(x,\eta)=\Cov(x,\eta)/\left(\sigma_{\eta}\sqrt{\Var(x)}\right)$, it is straightforward to confirm using the formulae given earlier for $\Cov(x,\eta)$ and $\Var(x)$ that $\Corr(x,\eta)$ is strictly single peaked in $\beta$ (see \autoref{eq:corr_deriv} in the Appendix), with a maximum of $\Corr(x,\eta)=1$ when $\beta=0$. Furthermore, $\Corr(x,\eta)\geq 0$ when $\rho \geq 0$ and $\beta\geq 0$. Consequently, at least for $\rho\geq 0$ and $\beta\geq 0$, the designer obtains less information about $\eta$ when she chooses a larger $\beta$.\footnote{Less information is not generally in the \citet{Blackwell51} sense unless the prior on $(\eta,\gamma)$ is bivariate normal. Rather, it is in the sense of a higher information loss from linearly estimating $\eta$ using $x$: $\E[(\hat \eta_{\beta}(x)-\eta)^2]$ is increasing in $\beta$.}
\end{remark}

\subsubsection{Benchmark policies}
\label{sec:benchmarks}

\paragraph{Constant.} A rule that does not condition the allocation on the observable corresponds to a constant policy $(\beta, \beta_0)$ with $\beta=0$. A constant policy gives rise to a welfare loss of $\sigma^2_\eta + (\beta_0 - \mu_\eta)^2$. In the decomposition of \autoref{eq:twopartslin}, the entire welfare loss is due to misallocation; the information loss from estimation is zero because the agent's behavior $x=\eta$ fully reveals the natural action $\eta$. Under the constant policy the linear estimator $\hat \eta_0$ has coefficients $\hat \beta(0)=1$ and $\hat \beta_0(0)=0$.

\paragraph{Naive.} If the designer uses a constant policy $(\beta, \beta_0)$ with $\beta=0$, the agent responds with $X_0(\eta,\gamma)=\eta$. Suppose the designer gathers data produced from such behavior, and---failing to account for manipulation---expects the agent to maintain this strategy regardless of the policy. Then the designer would (incorrectly) perceive her optimal policy to be  $(\bn,\bn_0)\equiv (\hat \beta(0), \hat \beta_0(0) )=(1,0)$.  Alternatively, this would be the designer's optimum absent any data manipulation (e.g., were $m=0$ instead of our maintained assumption $m>0$).

\paragraph{Designer's best response.} More generally, suppose the designer expects the agent to use the strategy $X_{\beta}(\eta,\gamma)=\eta+ m  \beta \gamma$ regardless of the designer's policy. 
The designer would find it optimal in response to set an allocation rule $Y(x)$ equal to the best linear estimator of $\eta$ from $x$, i.e., a policy $(\hat \beta(\beta), \hat \beta_0(\beta))$ yielding $Y(x) = \hat \eta_{\beta} (x)$.

\paragraph{Fixed point.} We say that a policy $(\beq, \beq_0)$ is a \emph{fixed point} if $$\beq=\hat \beta(\beq) \quad \text{ and } \quad \beq_0 = \hat \beta_0(\beq).$$
Fixed points correspond to the  pure-strategy Nash equilibria of a game in which the designer's policy (chosen among linear policies) is set simultaneously with the agent's strategy (with the agent's best response given by \autoref{eq:linresponse}).  
This simultaneous-move game would eliminate the designer's commitment power. That is, instead of the designer committing to a policy---the Stackelberg solution---the policy is a best response to the agent's strategy that the policy induces. Under a fixed-point policy the designer uses information ex-post optimally: in the decomposition of \autoref{eq:twopartslin}, a fixed-point policy has zero misallocation loss.

\begin{figure}

\begin{center}

\begin{subfigure}{.9 \textwidth}

\begin{center}
\includegraphics[width=4 in]{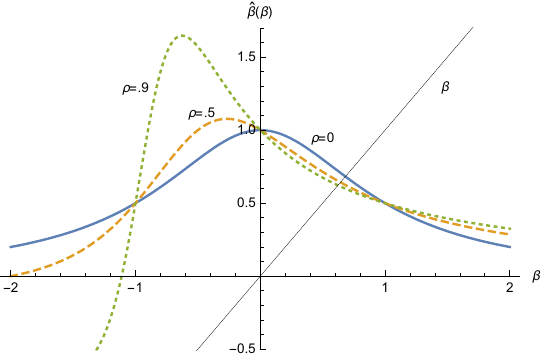}

\end{center}

\caption{\label{panel:posrho}Parameters: $\sigma_{\eta}=\sigma_{\gamma}=1$ and $m=1$.}

\end{subfigure}

\vskip .5in

\begin{subfigure}{.9 \textwidth}

\begin{center}
\includegraphics[width=4 in]{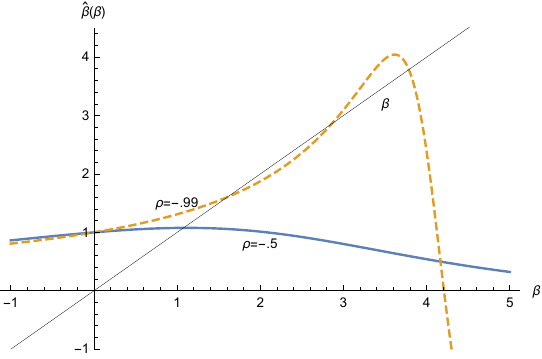}
\end{center}

\caption{\label{panel:negrho}Parameters: $\sigma_{\eta}=\sigma_{\gamma}=1$ and $m=0.24$.}

\end{subfigure}

\caption{\label{fig:betahat} The best response function $\hat \beta$. Intersections of $\hat \beta$ with $\beta$ correspond to fixed-point sensitivities $\beq$. \autoref{panel:posrho} illustrates that when $\rho\geq 0$, $\hat \beta$ is decreasing on $[0,\infty)$ and hence there is a unique positive fixed point. \autoref{panel:negrho} illustrates that there can be multiple positive fixed points when $\rho<0$.}

\end{center}

\end{figure}

\autoref{fig:betahat} illustrates some designer best response functions and fixed points. There can, in general, be multiple fixed points, including ones with negative sensitivity or weight on the agent's action (i.e., $\beq<0$). However, we are interested in, and will focus on, fixed points with positive sensitivity---positive fixed points, for brevity. The following result justifies our focus.

\begin{proposition}
\label{lem:pos_beq}
Any fixed point has $\beq\neq 0$.
There exists a fixed point with $\beq>0$. If $\rho\geq 0$, there is a unique positive fixed point, and it satisfies $\beq \in (0,1)$.
\end{proposition}

The proof uses a routine analysis of \autoref{eq:betahat}. A sensitivity of $\beta=0$ is not a fixed point because $\hat \beta(0)=1$.  A positive fixed point exists because $\hat \beta(\cdot)$ is continuous and $\hat \beta(\beta) \to 0 $ as $\beta \to \infty$, which reflects that the agent's action is uninformative about $\eta$ at the limit. The uniqueness result is because nonnegative correlation, $\rho\geq 0$, implies $\hat \beta(\cdot)$ is strictly decreasing on $[0,\infty)$, as seen in \autoref{panel:posrho}. \autoref{panel:negrho} illustrates that there can be multiple positive fixed points when $\rho<0$. That there is only one positive fixed point when $\rho\geq 0$ has been noted in different form in \citet[Proposition 4]{FK19}.

\autoref{lem:pos_beq} implies that any fixed point has positive information loss: the first term in \autoref{eq:twopartslin} is positive whenever $\beta\neq 0$. The information loss owes to our maintained assumption that $\sigma_\gamma>0$; were $\sigma_\gamma=0$, instead, the agent's strategy from \autoref{eq:linresponse} would fully reveal $\eta$ no matter a policy's sensitivity $\beta$.

\section{Analysis} \label{sec:analysis}

\subsection{Main Result}

We seek to compare the designer's optimal policy $(\beta^*,\beta^*_0)$  with the fixed points $(\beq,\beq_0)$. Take any fixed-point sensitivity $\beq>0$. Our main result,  \autoref{prop:main} below, is that the optimal policy puts less weight on the agent's action than does the fixed point. Furthermore, the optimal policy underutilizes information by putting less weight on the agent's action than does the OLS coefficient (and hence the best linear policy) given the data generated by the agent in response.

\begin{proposition}	
\label{prop:main}
There is a unique optimum, $(\beta^*,\beta^*_0)$. It has $\bstar >0$ and $\bstar<\beq$ for any fixed point with $\beq>0$. Moreover, $\hat \beta(\bstar)>\bstar$.
\end{proposition}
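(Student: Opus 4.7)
The constant $\beta_0$ is optimized out trivially: setting $\beta_0 = (1-\beta)\mu_\eta - m\beta^2\mu_\gamma$ eliminates the bias term in the expanded form of $\welfareloss$ (see the footnote expansion in \autoref{sec:objective}). The problem reduces to minimizing the quartic
\[ L(\beta) = (1-\beta)^2 \sigma^2_\eta + m^2 \beta^4 \sigma^2_\gamma - 2(1-\beta)m\beta^2\rho\sigma_\eta\sigma_\gamma, \]
which attains its minimum since $L(\beta)\to+\infty$ as $|\beta|\to\infty$. The crux is to exploit the info/misallocation decomposition \eqref{eq:twopartslin}: write $L(\beta) = I(\beta) + M(\beta)$, where $I(\beta) = K\beta^2/V(\beta)$ with $K \equiv m^2\sigma_\eta^2\sigma_\gamma^2(1-\rho^2) > 0$ and $V(\beta) = \var(x) > 0$ the variance of the agent's action, and $M(\beta) = (\beta-\hat\beta(\beta))^2 V(\beta) \ge 0$ vanishes exactly at fixed points.

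The local step: $M \ge 0$ with $M(\beq) = 0$ forces $M'(\beq) = 0$, so $L'(\beq) = I'(\beq)$. A short computation gives
\[ I'(\beq) = \frac{2K\,\beq\,(\sigma^2_\eta + m\rho\sigma_\eta\sigma_\gamma\,\beq)}{V(\beq)^2}, \]
and the fixed-point identity $\hat\beta(\beq) = \beq$ simplifies this to $2K(\beq)^2/V(\beq) > 0$ for every $\beq > 0$. Hence $L$ strictly decreases as $\beta$ drops slightly below any positive fixed point, ruling out $\bstar = \beq$.

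The main obstacle is upgrading this local fact to the global claim $\bstar < \beq$ for every positive fixed point. Let $\beq_{\min}$ be the smallest positive fixed point. I aim to show $L(\beta) \ge L(\beq_{\min})$ for all $\beta \ge \beq_{\min}$; combined with the local decrease just shown, this yields $\bstar < \beq_{\min} \le \beq$. The clean case is $\rho \ge 0$: the $I'$ formula shows $I$ is monotone increasing on $[0,\infty)$, so $L \ge I \ge I(\beq_{\min}) = L(\beq_{\min})$, and \autoref{lem:pos_beq} already guarantees a unique positive fixed point. When $\rho < 0$, $I$ can decrease past $\beta = \sigma_\eta/(m|\rho|\sigma_\gamma)$, but precisely there $\hat\beta(\beta) < 0$, so $(\beta-\hat\beta(\beta))^2 > \beta^2$ and the misallocation term alone bounds $L$ below the required level. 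Multiple positive fixed points (possible only when $\rho<0$) are handled by partitioning $(\beq_{\min},\infty)$ into subintervals between consecutive fixed points and bounding $L$ on each either via $I$ (still exceeding $I(\beq_{\min})$ by monotonicity up to any positive fixed point) or via $M$ (where misallocation dominates).

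The remaining parts of the proposition are short. Positivity $\bstar > 0$: $L'(0) = -2\sigma_\eta^2 < 0$ rules out $\bstar = 0$; and the identity $L(-\beta) - L(\beta) = 4\beta(\sigma_\eta^2 - m\rho\sigma_\eta\sigma_\gamma\beta^2)$ shows that for $\rho \le 0$ a negative $\beta$ is strictly dominated by its mirror image, while for $\rho>0$ the quartic growth $m^2\sigma_\gamma^2\beta^4$ together with $L(\beq_{\min}) < \sigma_\eta^2$ rules out a negative minimizer. The inequality $\hat\beta(\bstar) > \bstar$ is immediate: $\hat\beta(0) = 1 > 0$ and $\hat\beta(\beta) - \beta$ has no zero on $(0,\beq_{\min})$ by definition, so positivity persists throughout, in particular at $\bstar \in (0,\beq_{\min})$. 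Finally, uniqueness of $\bstar$ follows because $L'$ is a cubic whose sign-change structure, together with the bounds above, is compatible with exactly one global minimizer.
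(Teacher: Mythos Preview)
Your local step matches the paper's \autoref{lem:lossderivbeq}: the decomposition $L=I+M$ with $I(\beta)=K\beta^{2}/V(\beta)$ and $M(\beq)=0$ forcing $M'(\beq)=0$ cleanly gives $L'(\beq)=2K(\beq)^{2}/V(\beq)>0$. The final claim $\hat\beta(\bstar)>\bstar$ is also handled exactly as in the paper.

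The gaps are in the global steps. For $\rho<0$ and $\beta>\beta_{0}\equiv\sigma_{\eta}/(m|\rho|\sigma_{\gamma})$, your assertion that ``the misallocation term alone bounds $L$ below the required level'' is not substantiated: $(\beta-\hat\beta(\beta))^{2}>\beta^{2}$ only gives $M(\beta)>\beta^{2}V(\beta)$, which need not exceed $I(\beq_{\min})$. (A one-line fix exists: from $L(\beta)=\sigma_{\eta}^{2}+\beta^{2}V(\beta)-2\beta C(\beta)$ with $C(\beta)<0$ on that range, $L(\beta)>\sigma_{\eta}^{2}>L(\beq_{\min})$.) For $\rho>0$, your positivity argument is genuinely incomplete: ``quartic growth together with $L(\beq_{\min})<\sigma_{\eta}^{2}$'' does not rule out a negative minimizer, and indeed $L(\beta)<\sigma_{\eta}^{2}$ can occur for some $\beta<0$ (e.g.\ $\rho=0.99$, $k=1$, $\beta\approx -1.1$). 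The paper handles this via a reflection around $\beta^{m}=-1/(2k)$ together with a comparison at $\rho=1$ (Step~2 of its Lemma~A.1). Finally, your uniqueness argument is only a sketch: a cubic $L'$ can have three real roots and hence two local minima, and you have not excluded a tie; the paper settles this by a convexity/concavity case analysis of $L'$ on $(0,2)$ (Step~5 of Lemma~A.1). In short, the paper packages all global content into the characterization $\bstar=\min\{\beta\ge 0:L'(\beta)\ge 0\}$ with $L''(\bstar)>0$, after which $\bstar<\beq$ is immediate from $L'(\beq)>0$; your direct route via $L(\beta)\ge L(\beq_{\min})$ is elegant where it works (notably $\rho\ge 0$) but needs the patches above to be a complete proof.
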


For a concrete example, take $m=\vareta=\vargamma=1$ and $\rho=0$. Recall that the sensitivity of the naive policy is (normalized to) $\beta=1$. The unique fixed-point policy has $\beq\approx0.68$. The optimal policy reduces the sensitivity to $\bstar\approx 0.59$.  Given the agent's behavior under this policy, the designer would ex post prefer the higher value $\hat \beta(\bstar)\approx 0.74$. In this example not only is $\hat \beta(\bstar)>\bstar$, but  $\hat \beta(\bstar)>\beq$; we explain subsequently that this point holds whenever the correlation $\rho$ is nonnegative.


Here is the intuition for the comparison of the optimum with fixed points, as illustrated graphically in \autoref{fig:losses}. Consider a designer choosing $\beta=\beq>0$. When  paired with the correspondingly optimal $\beta_0$, this policy is  ex-post optimal in the sense that misallocation loss (the second term in the welfare decomposition \eqref{eq:twopartslin}) given the information the designer obtains about $\eta$ is minimized at zero.  Adjusting the sensitivity $\beta$ in either direction from $\beq$ increases  misallocation loss, but this harm is second order because we are starting from a minimum.  By contrast, at $\beta=\beq$ there is positive information loss from estimation (the first term in \eqref{eq:twopartslin}) because the agent's action does not reveal $\eta$. Lowering $\beta$ reduces information loss from estimation, which yields a first-order benefit. (The first-order benefit was suggested by \autoref{rem:lessinfo} for $\rho \geq 0$, and the point is general, as elaborated below.) Hence, there is a net first-order welfare benefit of lowering $\beta$ from $\beq$. Of course, the designer wouldn't lower $\beta$ down to 0, since making some use of the information from data is better than not using it at all.\footnote{Indeed, any fixed-point policy itself does better than the best constant policy $(\beta,\beta_0)=(0,\mu_\eta)$. Note, however, that this constant policy can be better than the naive policy $(\bn,\bn_0)=(1,0)$.} 

The proof of \autoref{prop:main} in \autoref{app:proofs} establishes uniqueness of the global optimum, rules out that it is negative, and shows that it is less than every fixed point with $\beq>0$. \autoref{lem:lossderivbeq} formalizes a key step, the aforementioned first-order benefit of reducing $\beta$ from any $\beq$. To state the lemma, let $\welfloss(\beta)$ be the welfare loss from policy $\beta$ (paired with the correspondingly optimal $\beta_0$), with derivative $\welfloss'(\beta)$.

\begin{lemma}
\label{lem:lossderivbeq}
For any $\beq$, it holds that $\welfloss'(\beq) > 0$.
\end{lemma}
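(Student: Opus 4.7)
My starting point is the decomposition \eqref{eq:twopartslin}: writing $\welfloss(\beta) = I(\beta) + M(\beta)$, where $I(\beta) := \E[(\hat\eta_\beta(x) - \eta)^2]$ is the information loss (depending on $\beta$ only through the agent's strategy $X_\beta$, not on $\beta_0$) and $M(\beta) := \E[(Y(x) - \hat\eta_\beta(x))^2]$ is the misallocation loss evaluated at the $\beta_0 = \beta_0^*(\beta)$ that is optimal given $\beta$. The key observation is that $M'(\beq) = 0$, so that $\welfloss'(\beq) = I'(\beq)$, and then $I'(\beq) > 0$ by direct computation.

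To verify $M'(\beq) = 0$, I would compute $M(\beta) = (\beta - \hat\beta(\beta))^2 \var(x_\beta)$ after optimizing out $\beta_0$ (the cross-term vanishes by choice of constant). At a fixed point, $\beq = \hat\beta(\beq)$, so $M(\beq) = 0$; since $M \geq 0$ globally, $\beq$ is a minimizer of $M$ and $M'(\beq) = 0$. Alternatively, one may differentiate the closed form directly and note that the factor $(\beq - \hat\beta(\beq))$ kills both arising terms.

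Next, I would compute $I'$ in closed form using $I(\beta) = \vareta - u(\beta)^2/v(\beta)$, where
\[
u(\beta) = \cov(\eta, x_\beta) = \vareta + m\rho\sigma_\eta\sigma_\gamma\,\beta, \qquad v(\beta) = \var(x_\beta) = \vareta + 2m\rho\sigma_\eta\sigma_\gamma\,\beta + m^2\vargamma\,\beta^2,
\]
so that $\hat\beta(\beta) = u(\beta)/v(\beta)$. Differentiating yields
\[
I'(\beta) = \frac{u(\beta)\bigl(u(\beta)v'(\beta) - 2u'(\beta)v(\beta)\bigr)}{v(\beta)^2},
\]
and the crucial algebraic step---essentially the only non-routine one---is expanding the inner bracket to obtain the clean identity $uv' - 2u'v = 2 m^2 \sigma_\eta^2 \sigma_\gamma^2 (1-\rho^2)\,\beta$. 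Evaluating at $\beq$ and substituting $u(\beq) = \hat\beta(\beq)\,v(\beq) = \beq\, v(\beq)$ gives
\[
\welfloss'(\beq) = I'(\beq) = \frac{2\,\beq^2\, m^2 \sigma_\eta^2 \sigma_\gamma^2 (1-\rho^2)}{v(\beq)} > 0,
\]
where positivity uses $\rho \in (-1,1)$, $v(\beq) > 0$, and $\beq \neq 0$ (since $\hat\beta(0) = 1$, the value $0$ cannot be a fixed point). The main obstacle, to the extent there is one, is organizing the algebra so that the cancellation producing the $(1-\rho^2)$ factor is transparent; geometrically, this factor reflects that when $\eta$ and $\gamma$ are imperfectly correlated, increasing manipulation genuinely degrades the linear-regression fit, and at a fixed point this first-order informational loss is not offset by any first-order misallocation gain.
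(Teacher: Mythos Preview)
Your proposal is correct and takes essentially the same approach as the paper: decompose $\welfloss$ into information loss plus misallocation loss via \eqref{eq:twopartslin}, observe that the misallocation term has zero derivative at any fixed point, and compute the information-loss derivative to arrive at exactly the paper's closed form $\welfloss'(\beq)=2(\beq)^2 m^2 \sigma_\eta^2\sigma_\gamma^2(1-\rho^2)/\var(x)$. Your organization of the algebra through the identity $uv'-2u'v=2m^2\sigma_\eta^2\sigma_\gamma^2(1-\rho^2)\beta$ and your minimum-at-zero argument for $M'(\beq)=0$ are minor streamlinings of the paper's presentation, but the route is the same.
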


\autoref{lem:lossderivbeq} applies regardless of the sign of the correlation parameter $\rho$ and also applies to negative values of $\beq$ when those exist. Here is the logic for the lemma. As noted above, starting from a fixed point the first-order change in welfare loss is just the change in information loss from estimation. Recall from \autoref{sec:regression} that information loss is proportional to $1-R^2_{x \eta}$, and $R^2_{x \eta}=\Corr(x,\eta)^2$. By \autoref{eq:betahat}, the sign of $\Corr(x,\eta)$ is that of $\hat \beta$, which at a fixed point is the same as the sign of $\beq$. \autoref{rem:lessinfo} established that, regardless of the sign of $\rho$, $\Corr(x,\eta)$ is increasing in $\beta$ for $\beta < 0$ and is decreasing in $\beta$ for $\beta>0$. Putting these facts together, at a positive fixed point $\Corr(x,\eta)$ is positive and locally  decreasing in $\beta$, while at a negative fixed point $\Corr(x,\eta)$ is negative and locally increasing in $\beta$. In either case, information loss, which scales with $1-\Corr(x,\eta)^2$, is locally increasing in $\beta$.

\begin{figure}
\begin{center}
\includegraphics[width=6in]{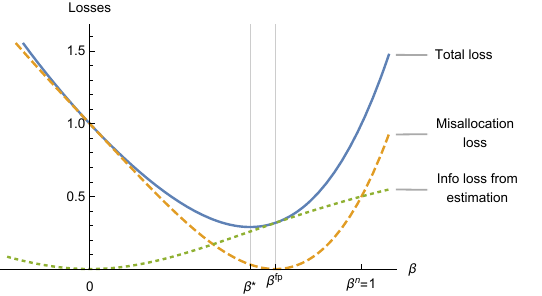}
\end{center}

\caption{\label{fig:losses} The welfare loss decomposition from \autoref{eq:twopartslin} for policy $(\beta_0, \beta)$, with the optimal $\beta_0$ plugged in for each $\beta$ on the horizontal axis. Parameters: $\sigma_{\eta}=\sigma_{\gamma}=m=1$ and $\rho=0$. Numerical solutions: $\bstar=0.590$ and $\beq=0.682$.}

\end{figure}

The last part of \autoref{prop:main} says that information is underutilized at the optimum: $\hat\beta (\bstar)>\bstar$. The logic for this result can be readily understood using \autoref{fig:betahat}. As seen there, $\hat \beta(0)>0$, and hence by continuity, $\hat \beta(\beta)>\beta$ for all positive $\beta$ less than the smallest positive fixed point.\footnote{Note that the underutilization result requires $\bstar$ to be less than all positive fixed points, not just some of them. For example, in the $\rho=-0.99$ curve in \autoref{panel:negrho}, $\hat \beta(\beta)<\beta$ for all $\beta$ in between the smallest and the middle fixed points.} Moreover, as seen in  \autoref{panel:posrho}, $\rho\geq 0$ implies that $\hat \beta(\cdot)$ is strictly decreasing on $[0,\infty)$, and hence $\hat \beta(\bstar)>\beq$ for the unique positive fixed point $\beq$. But the $\rho=-.99$ curve in \autoref{panel:negrho} shows that negative correlation can lead to $\hat \beta(\bstar)<\beq$ for all fixed points $\beq$.


\begin{remark}
The welfare gains from commitment can be substantial. As  $\rho \to -1$ and for suitable other parameters (viz., $m \sigma_\gamma/\sigma_\eta \to 1/4^+$) the unique fixed point's welfare is arbitrarily close to that of the best constant policy \mbox{$Y(x)=\mu_\eta$}, while the optimal policy's welfare is arbitrarily close to the first best's. The welfare of both the first-best policy and the constant policy are independent of $\rho$; the former is $0$ (our normalization) while the latter is $-\sigma^2_\eta$, which can be arbitrarily low.
\end{remark}

\begin{remark}
\label{rem:bstar<1}
When correlation $\rho$ is nonnegative, the optimal sensitivity $\bstar$ is less than a naive designer's choice of $\beta=1$ (see \autoref{sec:benchmarks}). This follows from the unique positive fixed point satisfying $\beq<1$ when $\rho\geq 0$ (\autoref{lem:pos_beq}) and $\bstar \in (0, \beq)$ (\autoref{prop:main}). However, when $\rho<0$ it is possible that $\bstar>1$. In fact, the proof of \autoref{prop:main} yields a characterization: $\bstar <1$ if $2m\sigma_\gamma / \sigma_\eta>-\rho$; $\bstar>1$ if $2 m\sigma_\gamma / \sigma_\eta<-\rho$; and $\bstar=1$ if $2 m\sigma_\gamma / \sigma_\eta=-\rho$. See \autoref{claim:bstarnaive} in \appendixref{appsec:proofmain}.
\end{remark}

\subsection{Comparative Statics}

We provide a few comparative statics below. In taking comparative statics, it is helpful to observe that the designer's best response $\hat \beta(\beta)$ defined in \autoref{eq:betahat} depends on parameters $m$, $\sigma_\eta$, and $\sigma_\gamma$ only through the statistic $k\equiv m\sigma_\gamma / \sigma_\eta$, as does the welfare loss $\welfloss(\beta)$ divided by $\sigma^2_\eta$ (see \autoref{eq:L} in \appendixref{appsec:proofmain}). 
Therefore, the optimal and fixed-point values $\bstar$ and $\beq$ also only depend on these parameters through $k$. The statistic $k$ summarizes the susceptibility of the allocation problem to manipulation: higher $k$ (arising from higher stakes or manipulability $m$ of the mechanism, greater variance in gaming ability $\sigma^2_\gamma$, or lower variance in natural actions $\sigma^2_\eta$) means that under any given policy, agents  
adjust their observable action $x$ further from their natural action $\eta$, relative to the spread of observables prior to manipulation. Hence, for comparative statics of $\bstar$ and $\beq$ over model primitives, it is sufficient to consider only the statistic $k$ and the correlation parameter $\rho$.

{\samepage
\begin{proposition}
\label{prop:compstats}
For $k\equiv m\sigma_\gamma / \sigma_\eta$, the following comparative statics hold for $\bstar$ and $\bstar/\beq$.\reducespace
\begin{enumerate}
	\item \label{compstats1} As $k \to \infty$, $\bstar \to 0$; as $k\to 0$, $\bstar \to 1$. If $\rho\geq 0$, then $\beta^*$ is strictly decreasing in $k$; if $\rho <0$, then $\beta^*$ is strictly quasiconcave in $k$, attaining a maximum at some point.
	\item \label{compstats2} $\beta^*$ is strictly increasing in $\rho$ when $k>3/4$, strictly decreasing in $\rho$ when $k<3/4$, and independent of $\rho$ when $k=3/4$.
	\item \label{compstats3}When $\rho=0$, $\bstar/\beq$ is strictly decreasing in $k$, approaching $\sqrt[3]{1/2}\approx 0.79$ as $k\to \infty$ and $1$ as $k\to 0$. 	
\end{enumerate}
\end{proposition}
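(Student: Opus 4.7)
The plan is to reduce to a two-parameter problem in $(k,\rho)$. After optimizing out $\beta_0$, the welfare loss divided by $\sigma_\eta^2$ becomes $(1-\beta)^2 + k^2\beta^4 - 2k\rho\beta^2(1-\beta)$, yielding the first-order condition
\[
1 - \bstar = 2k^2(\bstar)^3 + k\rho\bstar(3\bstar - 2). \qquad (\star)
\]
All three parts will follow from implicit differentiation of $(\star)$, the strict second-order condition $\welfloss''(\bstar) > 0$ (implied by global uniqueness in \autoref{prop:main}), and the inequality $\bstar < \beq$.

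For Part~\ref{compstats1}, the limits $\bstar \to 1$ as $k \to 0$ and $\bstar \to 0$ as $k \to \infty$ are immediate from $(\star)$: no other balance of orders works. For monotonicity in $k$ when $\rho \geq 0$, implicit differentiation gives $\sign(d\bstar/dk) = -\sign(4k(\bstar)^2 - \rho(2-3\bstar))$, so I must show $4k(\bstar)^2 > \rho(2-3\bstar)$. Rewriting $(\star)$ as $(1-\bstar) = k\bstar[2k(\bstar)^2 - \rho(2-3\bstar)]$ and using $\bstar < 1$ (valid by \autoref{prop:main} since $\beq \in (0,1)$ for $\rho \geq 0$) gives $2k(\bstar)^2 > \rho(2-3\bstar)$; a case split on the sign of $\rho(2-3\bstar)$ strengthens this to the stronger bound. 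For $\rho < 0$, I would first observe from $(\star)$ that $\bstar = 1 \iff k \in \{0, |\rho|/2\}$, and the local expansion $\bstar \approx 1 - \rho k$ yields $\bstar > 1$ on $(0, |\rho|/2)$ with $d\bstar/dk > 0$ at $k = 0^+$. The same monotonicity argument gives $d\bstar/dk < 0$ whenever $\bstar < 1$, hence on $[|\rho|/2, \infty)$. On $(0, |\rho|/2)$, any critical point must satisfy both $(\star)$ and $4k(\bstar)^2 = \rho(2-3\bstar)$; eliminating $k$ yields the quadratic $\rho^2(3\bstar-2)^2 = 8\bstar(\bstar-1)$ in $\bstar$, whose unique root $>1$ pins down a single critical point and hence strict quasi-concavity.

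For Part~\ref{compstats2}, differentiating the first-order condition in $\rho$ gives $d\bstar/d\rho \propto -(3\bstar - 2)$, so the sign change occurs at $\bstar = 2/3$. The key observation is that substituting $\beta = 2/3$ into $(\star)$ kills the entire $\rho$-dependent factor (since $3\beta - 2 = 0$), leaving $1/3 = 16k^2/27$, i.e., $k = 3/4$, independent of $\rho$. Combined with continuity of $\bstar$ in $k$, the limits from Part~\ref{compstats1}, and uniqueness from \autoref{prop:main}, this gives $\bstar > 2/3 \iff k < 3/4$, as claimed.

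For Part~\ref{compstats3} ($\rho = 0$), $(\star)$ and the fixed-point equation simplify to $2k^2(\bstar)^3 = 1 - \bstar$ and $k^2(\beq)^3 = 1 - \beq$. Implicit differentiation yields closed-form expressions for $d\log\bstar/dk$ and $d\log\beq/dk$; after simplification, $d(\bstar/\beq)/dk < 0 \iff 2(\bstar)^2 > (\beq)^2$. Dividing the two FOCs gives $2(\bstar)^3/(\beq)^3 = (1-\bstar)/(1-\beq) > 1$ (using $\bstar < \beq$ from \autoref{prop:main}), so $\bstar/\beq > 2^{-1/3}$, hence $(\bstar/\beq)^2 > 2^{-2/3} > 1/2$, giving the desired monotonicity. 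The boundary limits follow from $\bstar, \beq \to 1$ as $k \to 0$ and the asymptotics $\bstar \sim (2k^2)^{-1/3}$, $\beq \sim k^{-2/3}$ as $k \to \infty$. I expect the main obstacle to be the strict quasi-concavity for $\rho < 0$ in Part~\ref{compstats1}: ruling out additional critical points requires careful analysis of the quadratic in $\bstar$, particularly when $|\rho|$ is close to $1$ where its sign structure changes, and the global-optimality criterion from \autoref{prop:main} must be invoked to discard spurious FOC branches.
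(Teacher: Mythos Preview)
Your arguments for Parts~\ref{compstats2} and~\ref{compstats3} are essentially the paper's, reorganized; your direct derivation of $\bstar/\beq > 2^{-1/3}$ from $\bstar < \beq$ is in fact a bit cleaner than the paper's contradiction argument for the same inequality.

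For Part~\ref{compstats1} the paper takes a different, more uniform route. Rather than explicitly solving for critical points of $k \mapsto \bstar(k)$, it notes that $L_{\beta k k} = 8\beta^3 > 0$, so at any zero of $k \mapsto L_{\beta k}(\bstar(k),k)$ one has $\tfrac{d}{dk}L_{\beta k}(\bstar(k),k) = L_{\beta k k} > 0$ (since $d\bstar/dk = 0$ there). Thus this map has only up-crossings and hence at most one sign change; combined with $L_{\beta k}(\bstar(k),k) \to 2\rho$ as $k\to 0^+$, both the $\rho\ge 0$ and $\rho<0$ conclusions follow immediately, with no case analysis on the size of $|\rho|$.

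Your hands-on approach for $\rho\ge 0$ is correct, but the $\rho<0$ argument has precisely the gap you flag at the end. The quadratic $\rho^2(3\beta-2)^2 = 8\beta(\beta-1)$ has leading coefficient $9\rho^2 - 8$, so for $|\rho| > 2\sqrt{2}/3 \approx 0.943$ it is an upward parabola with \emph{two} roots above $1$---one in $(1,2)$ and one in $(2,\infty)$---and your ``unique root $>1$'' claim is false. The fix is to invoke $\bstar \in (0,2)$ from \autoref{lem:bstarsoln} to discard the root in $(2,\infty)$, but this needs to be stated explicitly rather than left as a caveat. A minor additional point: at $k = |\rho|/2$ you have $\bstar = 1$, not $\bstar < 1$, so the endpoint in ``$d\bstar/dk < 0$ on $[|\rho|/2,\infty)$'' needs separate treatment (direct evaluation there gives $4k(\bstar)^2 - \rho(2-3\bstar) = |\rho| > 0$).
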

}

The limits in part \ref{compstats1} of the proposition are intuitive. From \autoref{eq:linpolicy} and \autoref{eq:linresponse}, any particular $\beta>0$ would result in an arbitrarily large allocation for any given type as the manipulability parameter $m\to \infty$ (or, more generally, the statistic $k \to \infty$). Hence, the optimum $\bstar$ must go to $0$ at this limit. As $k\to 0$, by contrast, there is no manipulation, and hence the naive policy becomes optimal: $\bstar \to 1$. Furthermore, when correlation is nonnegative, $\rho\geq 0$, a designer faced with a more manipulable environment (larger $k$) should put less weight on the agent's action; the intuition is simply that the agent's action becomes less informative. However, when $\rho<0$, an increase in $k$ can actually make the agent's action more informative for a given $\beta$. That the optimum $\bstar$ is no longer monotonically decreasing in $k$ follows from the limit $\bstar \to 1$ as $k\to 0$  and \autoref{rem:bstar<1}'s observation that for any $\rho<0$, $\bstar>1$ when $k$ is sufficiently small.

Turning to part \ref{compstats2} of the proposition, one might expect greater correlation to increase the optimum $\bstar$, at least when correlation is nonnegative. But this turns out to hold only when the susceptibility-to-manipulation statistic $k$ is large enough. Here is an explanation. The formal proof shows that the cross partial derivative of welfare loss with respect to $\beta$ and $\rho$ is 
positive for 
$\beta$ above a threshold between $0$ and $1$ and negative for $\beta$ below that threshold. When $k$ is small, the designer can restrict attention to values of $\beta$ close to $1$, and hence the optimum is decreasing in $\rho$; when $k$ is large, it is $\beta$ close to $0$ that is relevant, and hence the optimum is increasing in $\rho$.

Finally, part \ref{compstats3} of \autoref{prop:compstats} implies that when the agent's characteristics are uncorrelated, the ratio  $\bstar/\beq$ decreases as the statistic $k$ increases.  As $k\to 0$, the fixed point fully reveals an agent's natural action ($\beq\to 1$) and so the designer does not benefit from commitment power: the fixed point is optimal as it provides the minimum possible welfare loss. As $k\to \infty$, both $\bstar$ and $\beq$ tend to zero yet the ratio $\bstar/\beq$ stays bounded. 

We also have the following comparative statics in welfare:

\begin{proposition}
\label{prop:compstatswelfare}
The designer's welfare loss at the optimum, $\welfloss(\bstar)$, is strictly increasing in $\sigma_\gamma$ and $m$; it is also strictly increasing in $\sigma_\eta$ when $\rho\geq 0$, but for $\rho<0$ it is strictly quasiconvex, attaining a minimum at $\sigma_\eta=m\sigma_\gamma/(-2\rho)$. Finally, for $\rho\geq 0$, the welfare loss at the optimum is strictly decreasing in $\rho$.
\end{proposition}

Since the agent's action is $x=\eta+m\beta \gamma$, it is intuitive that an increase in either $\sigma_\gamma$ or $m$ makes actions less informative about $\eta$, and hence reduces the designer's welfare. Indeed, $\welfloss(\bstar)$ divided by $\vareta$, which is $1-R^2_{x \eta}$ (\autoref{sec:regression}),  depends only on $\rho$ and $k\equiv m \sigma_\gamma/\sigma_\eta$, and is increasing in $k$; see \autoref{lem:dLdk} in \autoref{app:proofs}. The 
effect of an increase in $\sigma_\eta$ is more nuanced. 
 Writing
welfare loss $\welfloss(\bstar)$ as 
$\vareta \times \welfloss(\bstar)/\vareta$, we see that increasing $\sigma_\eta$ has competing effects:
the first term $(\vareta)$ increases due to more baseline uncertainty about $\eta$, whereas the second term ($\welfloss(\bstar)/\vareta$) decreases because spreading out natural actions mutes the noise from heterogenous gaming ability.
 The first effect unambiguously dominates for $\rho\geq 0$, but it turns out that for any $\rho<0$, if (and only if) $\sigma_\eta$ is sufficiently small then the second effect dominates and welfare loss 
is decreasing in $\sigma_\eta$.\footnote{Another quantity of interest is $\vareta-\welfloss(\bstar)$, the welfare gain from the optimal policy over the best constant policy. Regardless of $\rho$, this quantity is increasing in $\sigma_\eta$ because $\vareta-\welfloss(\bstar)=\vareta\left(1-\welfloss(\bstar)/\vareta\right)$ and $\welfloss(\bstar)/\vareta$ is decreasing in $\sigma_\eta$.}  Finally, the intuition for the welfare loss $\welfloss (\bstar)$ decreasing in $\rho$ when $\rho\geq 0$ is that a greater nonnegative correlation is akin to reducing the heterogeneity in gaming ability, which improves information.\footnote{\citet[Proposition 4]{FK19}
 note the same comparative statics in $\sigma_\gamma$, $\rho$, and (their analog of) $m$ for the unique positive fixed point when $\rho\geq 0$.}


\section{Discussion}
\label{sec:extensions}

\subsection{Mixed Dimensions of Interest}
\label{sec:hybrid}
As mentioned in \autoref{sec:disc}, in some settings a designer may care about matching the allocation to not just the agent's natural action $\eta$ but also the gaming ability $\gamma$. For instance, both dimensions may be predictive of future performance in school or at job tasks.  Accordingly, we consider in this subsection (alone) a designer whose welfare loss is given by
\begin{equation}
\label{eq:hybridobj}	
\mathbb{E}[(Y(x) - \left[(1-\kappa) \eta+ \kappa \gamma \right])^2],
\end{equation}
for some exogenous parameter $\kappa \in (0,1)$.\footnote{It is equivalent to posit a convex combination of quadratic losses from mismatching $\eta$ and $\gamma$ rather than a quadratic loss from mismatching the convex combination of $\eta$ and $\gamma$. That is, the designer's objective could equivalently be
$\E[(1-\kappa)(Y(x) - \eta)^2 + \kappa (Y(x)-\gamma)^2].$} 
The parameter $\kappa$ reflects the relative importance of gaming ability $\gamma$, compared to a $1-\kappa$ weight on natural action $\eta$. 

When considering objective \eqref{eq:hybridobj}, we continue to assume the designer uses linear policies of the form $Y(x)=\beta x+\beta_0$ and the agent responds according to  $X_{\beta}(\eta,\gamma)=\eta+m\beta\gamma$. The optimal policy now minimizes \eqref{eq:hybridobj}. Just as in \autoref{sec:regression}, given any $X_\beta$ the designer can calculate $\left(\hat \beta(\beta),\hat \beta_0(\beta)\right)$ as the OLS regression coefficients of $(1-\kappa)\eta+\kappa \gamma$ on $x$. A fixed-point policy $(\beq,\beq_0)$ is one in which $\beq=\hat \beta(\beq)$ and $\beq_0=\hat \beta_0(\beq)$.

The key intuition underlying our main result---when the designer cares only about matching $\eta$, it is optimal to reduce allocation sensitivity from a fixed point---is that decreasing manipulation incentives improves information about $\eta$ (\autoref{lem:lossderivbeq}). If the designer cared instead only about matching $\gamma$, the logic from \citet{FK19} suggests the opposite should hold. Intuitively, when the allocation sensitivity $\beta>0$ is larger,  the  variation in the observable $x$ $(=\eta+m\beta \gamma)$ depends more on $\gamma$ and less on $\eta$; hence, increasing manipulation incentives \emph{increases} information about $\gamma$. Put differently, when the designer cares only about matching $\gamma$, variation in $\eta$ simply adds noise to $x$; increasing $\beta$ effectively scales down that noise.

More generally, one might expect that a designer who puts sufficient weight on matching $\eta$ would optimally reduce the sensitivity from a fixed point, while a designer who weights matching $\gamma$ sufficiently would increase the sensitivity. We can establish such a result cleanly under the simplifying assumption that $\eta$ and $\gamma$ are uncorrelated, i.e., $\rho=0$.

\begin{proposition}
\label{prop:hybrid}
Assume $\rho=0$ and designer welfare loss \eqref{eq:hybridobj}. 
There is a unique fixed point among those with positive sensitivity, which we denote $(\beq,\beq_0)$.
Let \mbox{$\bar \kappa \equiv 1-\frac{\sqrt{1+4m}-1}{2m}$} $\in (0,1)$. It holds for the unique optimum $(\bstar,\bstar_0)$ that:
\begin{align*}
\kappa<\bar \kappa &\implies \bstar \in (0, \beq),\\
\kappa = \bar \kappa &\implies \bstar = \beq,\\
\kappa>\bar \kappa &\implies \bstar > \beq.
\end{align*}
\end{proposition}

For $\rho=0$, the proposition explicitly identifies identifies a critical threshold, $\bar \kappa \in (0,1)$, such that the fixed point with $\beq>0$ is (only) optimal when the designer's weight on matching $\gamma$ is $\kappa=\bar \kappa$. When $\kappa < \bar \kappa$, it is optimal to flatten the fixed point; when $\kappa > \bar \kappa$, it is optimal to steepen. These conclusions extend and qualify \autoref{prop:main}. Interestingly, the threshold weight $\bar \kappa$ is increasing in the manipulation parameter $m$, but does not depend on the variances $\vareta$ and $\vargamma$.



\subsection{Additional Designer Objectives}
\label{sec:addnobjectives}
Information about the agent's type might only affect part of the designer's welfare.  Our maintained assumption is that agents shift their action by $ m \beta \gamma$ away from their natural action. If the designer seeks to induce higher actions, then the designer will commit to increase $\beta$ above the optimum that was based on allocation accuracy.  This could occur in the task allocation application where the action $x$ corresponds to the output of an evaluation task---this output may be directly valuable to the firm. The designer would have the same incentives in the school testing application if she didn't care per se about a student's test score $x$, but cared about inducing the student to put in effort ($e = (x-\eta)/\sqrt{\gamma}$) to study for the exam.

 If the designer instead wants to reduce the agent's distortions, or the designer internalizes the costs of distortion, she will  weaken manipulation incentives by attenuating $\beta$ towards zero. This would be relevant to a government choosing the eligibility rule for a targeted policy. A government that values citizens' welfare is harmed by the costly effort devoted to gaming eligibility. 

Under either of the additional objectives mentioned in the two previous paragraphs, the designer's ex-post preference---given the agent's action---remains to match the allocation to the agent's type. Thus, fixed points are unaffected. So when the designer wants to reduce manipulation costs, our main point on reducing allocation sensitivity from a fixed point is strengthened. But when the designer wants to induce higher actions, the result could reverse.

\subsection{Linear Policies} \label{sec:discussnonlinear}

That the designer uses linear allocation rules is generally restrictive. We nevertheless think it is interesting to focus on such policies for reasons beyond their analytical tractability.

First, linear policies are simple, canonical, and practical. As we have explained, they can be studied using linear regressions, which are widely used for estimating the relationship between agent characteristics and observable data. 

Second, linear policies are straightforward to interpret. Specifically, they are (up to a constant) fully ordered by the allocation's sensitivity to data. We can therefore discuss what it means it means for a policy to be ``flatter'', i.e., less sensitive to data, and compare the optimum to fixed points in this respect. Furthermore, it allows us to discuss how the designer optimally underutilizes data.

Third, and relatedly, linear policies are focal for comparison with linear fixed points. \citet{Gesche19} and \citet{FK19} have shown that fixing any linear strategy for the agent, the designer's best response is  linear if the agent's type distribution is bivariate elliptical \citep{GGM03}, subsuming bivariate normal; see also \citet{FV00} and \citet{BT06}. \cite{ball2020} extends these results to a multidimensional action space. Hence, under these joint distributions---and when agents optimally respond to linear allocation rules with linear strategies (see \autoref{sec:linassms})---linear fixed-point policies of the current paper are fixed points even when the designer and the agent choose arbitrary (possibly nonlinear) strategies $Y(x)$ and $X(\eta,\gamma)$.

\section{Conclusion}

To recap: how should a designer with market power choose an allocation rule to maximize allocation accuracy, given that the rule affects agents' manipulation of the very data used for allocation? Our main result is that the designer should commit to underutilizing information. The optimal allocation rule is flatter than a fixed-point, or ex-post optimal, rule.

Consider a large search engine like Google that aims to provide its users with an accurate list of organic search results. Our result suggests that Google's algorithm ought to be less responsive to webpage elements susceptible to search engine optimization (SEO) than if it were to take as given the extent of SEO. An analogous point holds for Amazon's use of consumer reviews.

Another application concerns standardized testing for college or high school admissions. Our result says that when students can study to the test, the accuracy of admissions can be improved by underutilizing the information contained in test results. However, when there are many decentralized colleges, we would expect each to have limited market power. Thus, each college would tend to use test information ex-post optimally, resulting in less-accurate admissions. One may extrapolate that the more competitive the college sector is, the lower the resulting accuracy. 

In various applications, the data that designers have access to is multidimensional. Our results would suggest that a designer should flatten the allocation rule more on dimensions that are more manipulable. Thus, observables that are difficult for the agent to manipulate become relatively more important for the designer's decision than those that are easy to manipulate. For instance, compared to the ex-post optimum, credit scores should put relatively more weight on the length of a consumer's credit history and less on the current credit utilization rate when the former is less manipulable. See \citet{ball2020} for some recent work in this direction.

A notable point is that information loss in our model stems from agents' heterogeneous responses to incentives. That is, it is not simply the existence of SEO and studying to the test that produces our implications for search engines and college admissions; rather, it is because some website managers engage in more SEO and some students have better access to test preparation. Having said that, the rationale for flattening allocations is not restricted to this source of information loss. For instance, even a model with a one-dimensional type (e.g., no heterogeneity on our gaming ability $\gamma$) may have information loss from ``pooling at the top'' in a bounded action space.  This could be relevant to test taking when there is a binding upper bound on the test score. \autoref{app:binary} establishes a version of our flattening result for a simple model in that vein.

As we have discussed, linear designer policies and agent responses are a benchmark for understanding how to improve (the use of) information from manipulable data. An interesting topic for future research is the generalization of ``flattening'' allocations and ``underutilizing'' information to nonlinear models.

\newpage

\appendix
\newpage
\counterwithin{lemma}{section}
\counterwithin{equation}{section}
\counterwithin{proposition}{section}
\counterwithin{claim}{section}
\counterwithin{remark}{section}


\section{Appendix: Proofs}
\label{app:proofs}

\subsection{Proof of \autoref{lem:pos_beq}}
Recall from \autoref{sec:regression} that $\hat \beta(\beta)=\Cov(x,\eta)/\Var(x)$, with
\begin{align}
\Cov(x,\eta)&= \vareta + m \rho \sigma_\eta \sigma_\gamma \beta,\label{eq:cov}\\
\Var(x)&=\vareta + m^2  \vargamma \beta^2+ 2 m  \rho \sigma_\eta \sigma_\gamma \beta.\label{eq:var}
\end{align}
For $\beta\geq 0$, the fixed-point equation $\hat \beta(\beta)=\beta$ 
can thus be rewritten as the cubic equation
\begin{align}
m^2 \vargamma \beta^3 + 2 m \rho \sigma_\eta \sigma_\gamma \beta^2 + (\vareta-m \rho \sigma_\eta \sigma_\gamma) \beta - \vareta = 0. \label{eq:eqcubic}
\end{align}
The left-hand side of \eqref{eq:eqcubic} is continuous, negative at $\beta=0$ and tends to $\infty$ as $\beta \to \infty$. There is a positive solution to \eqref{eq:eqcubic} by the intermediate value theorem. 


We now show that $\hat \beta(\beta)$ is strictly decreasing on $[0,\infty)$ when $\rho\geq 0$. Differentiating \autoref{eq:var} yields
\begin{equation}
\label{eq:var_deriv}
\frac{\mathrm d}{\mathrm d \beta}  \Var(x)= 2 m^2 \sigma^2_\gamma+2 m\rho \sigma_\eta \sigma_\gamma.	
\end{equation}
Dividing \autoref{eq:cov} by $\sigma_\eta\sqrt{\Var(x)}$ and differentiating yields
\begin{equation}
\label{eq:corr_deriv}
\frac{\mathrm d}{\mathrm d \beta}  \Corr(x,\eta)= - \frac{\beta m^2 (1-\rho^2)\sigma_\eta \sigma_\gamma^2}{\Var(x)^{3/2}}.	
\end{equation}
\autoref{eq:betahat} implies that $\hat \beta= \Corr(x,\eta) \sigma_\eta / \sqrt{\Var(x)}$. For $\rho\geq 0$ and $\beta\geq 0$, \autoref{eq:var_deriv} and \autoref{eq:corr_deriv} respectively show that $\Var(x)$ is strictly increasing and $\Corr(x,\eta)$ is strictly decreasing in $\beta$. The desired conclusion follows.

\subsection{Proof of \autoref{prop:main}} \label{appsec:proofmain}

From \autoref{sec:objective}, $(\beta^*,\beta^*_0)$ solves
\[ 
\min_{(\beta,\beta_0) \in \Reals^2} \ \mathbb{E}[(m \beta^2  \gamma + \beta_0-(1-\beta)\eta)^2]. \]
The first-order condition with respect to $\beta_0$ implies
\begin{align*}
\beta_0^* &
= (1-\beta) \mu_ \eta -m \beta^2 \mu_ \gamma.
\end{align*}
Substituting $\beta_0^*$ into the objective, the designer chooses $\beta$ to minimize
\begin{align}
&\mathbb{E} [(m \beta^2 ( \gamma - \mu_ \gamma) -(1-\beta) (\eta-\mu_ \eta))^2] \notag
\\
=& (1-\beta)^2 \vareta +m^2 \beta^4 \vargamma - 2 (1-\beta) m \beta^2 \rho \sigma_\eta \sigma_\gamma \notag \\
= & \sigma^2_\eta \left[\left((1-\beta)-k\beta^2\right)^2+2(1-\rho)\beta^2(1-\beta)k\right],\notag
\end{align}
where
$$ k \equiv m \sigma_\gamma/\sigma_\eta>0.$$
Equivalently, for $k>0$ and $\rho \in (-1,1)$, $\beta^*$ minimizes
\begin{equation}
\loss(\beta,k,\rho) \equiv \left(k\beta^2+\beta -1\right)^2+2(1-\rho)\beta^2(1-\beta)k.\label{eq:L}	
\end{equation}
Differentiating,
\begin{align}
{\loss}_{\beta}(\beta,k,\rho) = -2(1-\beta)+4 k^2 \beta^3+2 \rho k \beta(3\beta-2).\label{eq:Ldbeta}
\end{align}
Note that $L_{\beta}(0,k,\rho)<0$, i.e., there is a first-order benefit from putting some positive weight on the agent's action.


The last statement of \autoref{prop:main} follows from the second because, from \autoref{eq:betahat}, $\hat \beta(\cdot)$ is continuous, $\hat \beta(0)>0$, and $\hat \beta(\beq)=\beq$ for any $\beq$. \autoref{prop:main} is thus implied by \autoref{lem:lossderivbeq} and the following result. We abuse notation hereafter and drop the arguments $k$ and $\rho$ from $L(\cdot)$ when those are held fixed. So, for example, $L(\beta)$ means that both $k$ and $\rho$ are fixed.

\begin{lemma}
\label{lem:bstarsoln}
There exists $\bstar \in (0,2)$  such that:\reducespace
\begin{enumerate}
\item The loss function $L(\beta)$ from \eqref{eq:L} is uniquely minimized over $\beta \in \mathbb{R}$ at $\beta^*$.
\label{lem:betapos}
\item \label{bstarfirstzero} $\beta^* = \min_{\beta\ge0}\{ \beta : L'(\beta) \ge 0\}
$.
\item $L''(\bstar)>0$.
\end{enumerate}
\end{lemma}


\begin{proof}
The proof has five steps below. Steps 1--3 are building blocks to Step 4, which establishes that all minimizers of $L(\beta)$ are in $(0,2)$. Step 5 then establishes there is in fact a unique minimizer, and it has the requisite properties. It is useful in this proof to extend the domain of the function $L$ defined in \eqref{eq:L} to include $\rho=-1$ and $\rho=1$.

\underline{Step 1:} We first establish two useful properties of $L(\beta,\rho=1)$. Simplifying \eqref{eq:L},
\begin{equation*}
L(\beta,\rho=1) = \left(k \beta^2+\beta -1\right)^2
\end{equation*}
is the square of a quadratic. The quadratic $k \beta^2+\beta -1$ is strictly convex in $\beta$, minimized at 
\begin{equation}
	\beta=\beta^m \equiv -1/(2k)<0,\label{eq:beta^m}
\end{equation}
and, because it has one negative and one positive root, it is negative and strictly increasing on $[\beta^m,0]$. 
It follows that $L(\cdot,\rho=1)$ is strictly decreasing on $[\beta^m,0]$ and symmetric around $\beta^m$ (i.e., for any $x$, $L(\beta^m+x,\rho=1)=L(\beta^m-x,\rho=1)$).


\underline{Step 2:} We claim that for any $\beta<0$ and $\rho<1$, there is $\tilde \beta\geq 0$ such that $L(\tilde \beta)<L(\beta)$. Since $L'(0)<0$, 
it follows that for $\rho<1$, $\argmin L(\beta,\rho) \subset \Reals_{++}$.

To prove the claim, we first establish that for any $x>0$ and $\beta=\beta^m-x$ (where $\beta^m$ is defined in \eqref{eq:beta^m}), the symmetric point $\beta^m+x$ has a lower loss when $\rho<1$; note that $\beta^m+x$ may also be negative. 
The argument is as follows:
\begin{align*}
L(\beta^m-x,\rho)-L(\beta^m+x,\rho)&=L(\beta^m-x,\rho=1)+2(1-\rho)(\beta^m-x)^2(1-\beta^m+x)k\\ & \quad - \left[L(\beta^m+x,\rho=1)+2(1-\rho)(\beta^m+x)^2(1-\beta^m-x)k\right]\\
&=2(1-\rho)k\left[(\beta^m-x)^2(1-\beta^m+x)-(\beta^m+x)^2(1-\beta^m-x) \right]\\
&= 4(1-\rho)k x \left(\beta^m (3 \beta^m-2)+x^2\right)\\
&\geq 0,
\end{align*}
where the first equality is from the definition in \eqref{eq:L}, the second is because Step 1 established that $L(\beta^m+x,\rho=1)=L(\beta^m-x,\rho=1)$, the third equality is from algebraic simplification, and the inequality is because $\beta^m<0$, $x>0$, and $\rho<1$.

It now suffices to establish $L(0,\rho)<L(\beta,\rho)$ for all $\beta \in [\beta^m,0)$. Differentiating \eqref{eq:Ldbeta} yields $L_{\beta \rho}(\beta,\rho) = 2 k \beta (3\beta-2) > 0$ when $\beta<0$.
Hence for $\beta\in [\beta^m,0)$, $L(0,\rho)-L(\beta,\rho)\leq L(0,\rho=1)-L(\beta,\rho=1)<0$, where the strict inequality is from Step 1.


\underline{Step 3:} $\argmin_\beta L(\beta,\rho=-1) \cap (0,2] \neq \emptyset$. 

To prove this, simplify \eqref{eq:L} to get
$$L(\beta,\rho=-1)=\left(k \beta^2-\beta+1\right)^2.$$
The quadratic $k\beta^2-\beta+1$ is strictly convex in $\beta$ and minimized at $\beta=1/(2k)$; moreover, if $k\geq 1/4$ then that quadratic is nonnegative, and otherwise it is equal to zero at $\beta=\frac{1\pm \sqrt{1-4k}}{2k}$. 
It follows that if $k\geq 1/4$, $\argmin L(\beta,\rho=-1)=\{1/(2k)\}\subset (0,2]$
.
If $k\in (0,1/4)$, $\min \argmin L(\beta,\rho=-1)= \frac{1-\sqrt{1-4k}}{2k}  \in (0,2)$.

\underline{Step 4:} For $\rho \in (-1,1)$, $\argmin_{\beta} L(\beta,\rho) \subset (0,2)$.

To prove this, note that $L_{\beta \rho}(\beta,\rho)=2k\beta(3\beta-2)>0$ when $\beta>2/3$. Monotone comparative statics (see \autoref{MCS} in the Supplementary Appendix) imply that on the domain $(2/3,\infty)$ every minimizer of $L(\cdot,\rho)$ when $\rho>-1$ is smaller than every minimizer of $L(\cdot,\rho=-1)$. Step 3 then implies that all minimizers when $\rho>-1$ are less than $2$; Step 2 established that when $\rho<1$, all minimizers are larger than $0$.

\underline{Step 5:} Finally, we claim that for $\rho\in (-1,1)$, $L'(\beta)$ has only one root in $(0,2)$; moreover, $L''(\beta)>0$ at that root.
The lemma follows because $L'(\beta)$ is continuous and \mbox{$L'(0)<0$}.

To prove the claim, first observe from \autoref{eq:Ldbeta} that 
$L'(\beta)$ is a cubic function that is initially strictly concave and then strictly convex, with inflection point \mbox{$\beta=-\rho/(2k)$}. For the rest of the proof, view $L'$ or $L''$ as a function of $\beta$ only.
\reducespace
\begin{enumerate}	
\item If $\rho\geq 0$, then the inflection point is negative, and thus $L'$ is strictly convex on $\beta>0$. Since $L'(0)<0$, $L'$ has only one positive root, and $L''>0$ at that root.
\item Consider $\rho\in (-1,0)$. 
$L''$ is minimized at the inflection point of $L'$. Differentiating \autoref{eq:Ldbeta} and evaluating at the inflection point,
$$L''\left(\frac{-\rho}{2k}\right)=2+12k^2\left(\frac{-\rho}{2k}\right)^2+4\rho k\left(3\left(\frac{-\rho}{2k}\right)-1\right)=2-3\rho^2-4k\rho.$$ 
If this expression is positive, then $L''(\beta)>0$ for all $\beta$, i.e., $L'$ is strictly increasing and hence has a unique root.

So suppose instead 
$2-3\rho^2-4k\rho\leq 0$. Equivalently, since $\rho<0$, suppose 
$$k \leq \frac{2 - 3 \rho^2}{4 \rho}.$$
The right-hand side of this inequality is less than $-\rho/4$ because $\rho\in (-1,0)$,
and hence $k<-\rho/4$. Consequently, the inflection point,  $\beta=-\rho/(2k)$, is larger than $2$, and therefore $L'(\beta)$ is concave over $\beta \in (0,2)$. Moreover,  recall that $L'(0)<0$, and also observe that $L'(2)=32 k^2 + 16 k \rho +2>0$ because $k<-\rho/4$ and $\rho \in (-1,0)$. 
It follows that $L'$ has only one root on $(0,2)$, and $L''>0$ at that root.\qedhere
\end{enumerate}
\end{proof}

\begin{claim}
\label{claim:bstarnaive}
It holds that $\bstar <1$ if $2k>-\rho$, $\bstar>1$ if $2k<-\rho$, and $\bstar=1$ if $2k=-\rho$.
\end{claim}
\begin{proof}
\autoref{eq:Ldbeta} yields $L_\beta(1,k,\rho)=4k^2+2k\rho$. So $\sign[L_\beta(1,k,\rho)]=\sign[2k+\rho]$. As $L_\beta(0,k,\rho)<0$, the result follows from the fact that $L_\beta(\cdot,k,\rho)$ is continuous and has only one root in $(0,2)$, which is $\bstar$ (Step 5 in the proof of \autoref{lem:bstarsoln}).
\end{proof}

\subsection{Proof of \autoref{lem:lossderivbeq}}
\label{proof:lossderivbeq}
As explained in \autoref{sec:regression},
\begin{align*}
\mathbb{E}[(\hat \eta_{\beta}(x)-\eta)^2] &= \sigma^2_\eta \left(1-R^2_{\eta x}\right)=\sigma^2_\eta\left(1-\Corr(x,\eta)^2\right).
\end{align*}
We also have
\begin{align*}
\mathbb{E}[(Y(x) - \hat \eta_{\beta}(x))^2] &= \E[(\beta x+\beta_0 -\hat \beta(\beta)x-\hat \beta_0(\beta))^2] \qquad \text{\small from definitions}\\
&=\E\left[\left((\beta-\hat \beta(\beta))(x-\E[x])\right)^2\right]\\
&=(\beta-\hat \beta(\beta))^2\var(x),
\end{align*}
where the second line is because 
$\beta \E[x]+\beta_0=\mu_\eta=\hat \beta(\beta)\E[x]+\hat \beta_0(\beta)$ (the second equality here is standard; for the first, see the beginning of the proof of \autoref{prop:main}) and hence $\beta_0-\hat \beta_0(\beta)=(\hat \beta(\beta)-\beta)\E[x]$.

Substituting these formulae into \autoref{eq:twopartslin} yields
$$\welfloss(\beta) = \underbrace{\sigma^2_{\eta}\left(1-\Corr(x,\eta)^2\right)}_{\text{Info loss from linearly estimating $\eta$ using $x$}}+\underbrace{(\beta-\hat \beta(\beta))^2\var(x)}_{\text{Misallocation loss given linear estimation}}.$$
Differentiating,
\begin{align*}
\welfloss'(\beta) &= \overbrace{- 2 \sigma^2_\eta \Corr(x,\eta) \frac{\mathrm d }{\mathrm d \beta} \Corr(x,\eta)}^{\text{Marginal change in info loss}} \\ &\qquad + \underbrace{\left( -2 (\beta - \hat \beta(\beta)) \hat \beta'(\beta) \var(x) +   (\beta - \hat \beta(\beta))^2 \frac{\mathrm d}{\mathrm d \beta} \var(x)\right)}_{\text{Marginal change in misallocation loss}}.
\end{align*}
Let us evaluate this expression at $\beta=\beq$. 
Since $\beq=\hat \beta(\beq)$, the marginal change in misallocation loss is evidently zero. 
Thus,
\begin{align*}
\sign\welfloss'(\beq) &= \left.- \left(\sign \Corr(x,\eta)\right) \left( \sign \frac{\mathrm d }{\mathrm d \beta} \Corr(x,\eta)\right)\right|_{\beta=\beq}\\
&= -(\sign \beq)(-\sign \beq)\\
&>0,
\end{align*}
where the second equality is because $\beq\neq 0$, $\sign \left.\Corr(x,\eta)\right|_{\beta=\beq}=\sign \hat \beta(\beq)=\sign \beq$ (see \autoref{eq:betahat}), and 
$\left.\sign \frac{\mathrm d }{\mathrm d \beta} \Corr(x,\eta)\right|_{\beta=\beq} = - \sign \beq$ (see \autoref{eq:corr_deriv}).

\subsection{Proof of \autoref{prop:compstats}}

The proof is via the following claims. Applying \autoref{lem:bstarsoln}, we without loss restrict attention to $\beta\in (0,2)$ in all the claims.

\begin{claim}
$\beta^*$ is continuously differentiable in $\rho$ and $k$.	
\end{claim}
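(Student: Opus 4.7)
The plan is to apply the implicit function theorem to the first-order condition characterizing $\beta^*$. By \autoref{lem:bstarsoln}, $\beta^*(k,\rho) \in (0,2)$ is the unique minimizer of $L(\cdot,k,\rho)$ and satisfies $L_\beta(\beta^*,k,\rho)=0$ together with the strict second-order condition $L_{\beta\beta}(\beta^*,k,\rho)>0$.

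First I would observe that $L$, as defined in \eqref{eq:L}, is a polynomial in $(\beta,k,\rho)$ and hence is jointly $C^\infty$ on $\mathbb{R}^3$ (restricting $k>0$ and $\rho\in(-1,1)$). In particular, $L_\beta$ is jointly $C^\infty$ in $(\beta,k,\rho)$. Fix any $(k_0,\rho_0)$ with $k_0>0$, $\rho_0\in(-1,1)$, and let $\beta_0 \equiv \beta^*(k_0,\rho_0)$. Then $L_\beta(\beta_0,k_0,\rho_0)=0$ and $L_{\beta\beta}(\beta_0,k_0,\rho_0)>0$ by \autoref{lem:bstarsoln}.

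The implicit function theorem therefore produces an open neighborhood $U$ of $(k_0,\rho_0)$ and a $C^\infty$ function $\tilde\beta:U\to(0,2)$ with $\tilde\beta(k_0,\rho_0)=\beta_0$ and $L_\beta(\tilde\beta(k,\rho),k,\rho)\equiv 0$ on $U$. To identify $\tilde\beta$ with $\beta^*$ on $U$, I would shrink $U$ so that $L_{\beta\beta}(\tilde\beta(k,\rho),k,\rho)>0$ throughout (possible by continuity). At each $(k,\rho)\in U$, $\tilde\beta(k,\rho)\in(0,2)$ is a critical point of $L(\cdot,k,\rho)$ with positive second derivative, hence a local minimum. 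By \autoref{lem:bstarsoln}, $L(\cdot,k,\rho)$ has a unique global minimizer in $\mathbb{R}$ and only one root of $L_\beta$ in $(0,2)$, so $\tilde\beta(k,\rho)=\beta^*(k,\rho)$ on $U$. Since $(k_0,\rho_0)$ was arbitrary, $\beta^*$ is $C^1$ (in fact $C^\infty$) in $(k,\rho)$.

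The only nontrivial input is the strict second-order condition $L_{\beta\beta}(\beta^*)>0$, which is precisely the content of \autoref{lem:bstarsoln}(3); everything else is an immediate application of the implicit function theorem to a polynomial. There is no real obstacle here provided one is careful to use uniqueness of the minimizer to match the local branch $\tilde\beta$ with the global selection $\beta^*$.
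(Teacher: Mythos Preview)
The proposal is correct and takes essentially the same approach as the paper: invoke \autoref{lem:bstarsoln}(3) to get $L_{\beta\beta}(\beta^*)>0$, then apply the implicit function theorem to the first-order condition $L_\beta=0$. Your write-up is in fact more careful than the paper's one-line proof, since you explicitly verify that the local IFT branch coincides with the global minimizer $\beta^*$ via the uniqueness in \autoref{lem:bstarsoln}.
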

\begin{proof}
\autoref{lem:bstarsoln} established that $\sign[L''(\beta^*)]>0$. Thus, the implicit function theorem guarantees the existence of 
$\frac{\mathrm d \beta^*}{\mathrm d k} = -\frac{L_{\beta k}}{L_{\beta \beta}}$ and $\frac{\mathrm d \beta^*}{\mathrm d \rho} = -\frac{L_{\beta \rho}}{L_{\beta \beta}}$.
\end{proof}

\begin{claim}
If $k>3/4$ then $\beta^*<2/3$ and is strictly increasing in $\rho$. 
If $k<3/4$ then $\beta^*>2/3$ and is strictly decreasing in $\rho$. 
If $k=3/4$ then $\beta^*=2/3$ independent of $\rho$.
\end{claim}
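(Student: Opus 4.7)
The plan hinges on the observation that the cross-partial $L_{\beta\rho}(\beta,k,\rho)=2k\beta(3\beta-2)$ vanishes precisely at $\beta=2/3$, so evaluating the first-order condition at this specific value eliminates the $\rho$-dependence and makes the $k=3/4$ threshold transparent.

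First I would substitute $\beta=2/3$ into \eqref{eq:Ldbeta}. The $\rho$ term drops out, yielding $L_\beta(2/3,k,\rho)=-\tfrac{2}{3}+\tfrac{32}{27}k^2$, which is zero exactly when $k=3/4$, positive when $k>3/4$, and negative when $k<3/4$, uniformly in $\rho$.

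Second I would invoke \autoref{lem:bstarsoln}: $\beta^*\in(0,2)$ is the unique root of $L'$ there, and Step 5 of that lemma's proof shows $L'$ has a single sign change on $(0,2)$, going from negative (since $L'(0)<0$) to positive. Combining with Step 1: if $L_\beta(2/3)>0$ then by the intermediate value theorem the root lies in $(0,2/3)$, hence $\beta^*<2/3$; if $L_\beta(2/3)<0$ then the root lies in $(2/3,2)$, hence $\beta^*>2/3$; and if $L_\beta(2/3)=0$ then uniqueness of the root in $(0,2)$ forces $\beta^*=2/3$. In the last case this gives $\beta^*=2/3$ for \emph{every} $\rho\in(-1,1)$, which is the independence assertion.

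Third, for the comparative static in $\rho$ when $k\neq 3/4$, I would use the formula $d\beta^*/d\rho=-L_{\beta\rho}(\beta^*)/L_{\beta\beta}(\beta^*)$ supplied by the preceding claim, together with $L_{\beta\beta}(\beta^*)>0$ (again from \autoref{lem:bstarsoln}). The sign of $d\beta^*/d\rho$ is thus the sign of $-2k\beta^*(3\beta^*-2)$, i.e., the sign of $2/3-\beta^*$. Hence $k>3/4\Rightarrow\beta^*<2/3\Rightarrow d\beta^*/d\rho>0$, and $k<3/4\Rightarrow\beta^*>2/3\Rightarrow d\beta^*/d\rho<0$, exactly as claimed.

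The only mild subtlety I anticipate is the boundary case $k=3/4$: one must rule out some other root of $L'$ in $(0,2/3)$ when $L_\beta(2/3)=0$, for otherwise $\beta^*$ could be strictly smaller than $2/3$. But this is precisely what the uniqueness of the root in Step 5 of the preceding lemma's proof delivers, so no genuinely new argument is required. Everything else reduces to a direct evaluation at $\beta=2/3$ and an application of the implicit function theorem already in place.
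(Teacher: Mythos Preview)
Your proof is correct and rests on the same two observations as the paper's: (i) $L_\beta(2/3,k,\rho)$ is independent of $\rho$ and has the sign of $k-3/4$, and (ii) the cross-partial $L_{\beta\rho}=2k\beta(3\beta-2)$ changes sign at $\beta=2/3$. The execution differs in two small but worthwhile ways. First, you treat all three cases uniformly by invoking the single-sign-change property of $L'$ on $(0,2)$ established in Step~5 of \autoref{lem:bstarsoln}; the paper instead handles $k=3/4$ by a direct check that $L_\beta$ is strictly increasing, and for $k<3/4$ takes a detour---first showing $\beta^*>2/3$ at $\rho=0$ via monotone comparative statics in $k$, then extending to all $\rho$ by continuity. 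Your route is shorter and avoids the auxiliary argument. Second, for the monotonicity in $\rho$ you apply the implicit-function-theorem formula from the preceding claim, whereas the paper invokes its monotone-comparative-statics \autoref{MCS}; both are valid given $L_{\beta\beta}(\beta^*)>0$, and your version is slightly more direct here since the differentiability has already been established.
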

\begin{proof}
From \autoref{eq:Ldbeta} compute the cross partial
$$L_{\beta \rho}(\cdot) = 2 k \beta (3\beta-2).$$
Hence $L_{\beta \rho}<0$ when $\beta < 2/3$, while $L_{\beta \rho}>0$ when $\beta>2/3$. Moreover, it follows from \autoref{eq:Ldbeta} that when $\beta=2/3$, $\sign[L_\beta]=\sign[k-3/4]$ independent of $\rho$. 
\begin{enumerate}
	\item Consider $k=3/4$. Routine algebra verifies that $L_{\beta}$ is strictly increasing in $\beta$, and hence $L_\beta=0 \implies \beta=2/3$, i.e., $\beta^*=2/3$ independent of $\rho$.
	\item Consider $k>3/4$. Since $L_\beta>0$ when $\beta=2/3$, it follows that $\beta^*<2/3$. (Recall $L_\beta<0$ when $\beta=0$, and \autoref{lem:bstarsoln} implies that $\beta^*=\min\{\beta>0:L_\beta=0\}$.) Since $L_{\beta \rho}<0$ on the domain $\beta<2/3$, monotone comparative statics (see \autoref{MCS} in the Supplementary Appendix) imply $\beta^*$ is strictly increasing in $\rho$.
	\item Consider $k<3/4$. For $\rho=0$, we have $L_{\beta k} = 8k\beta^3>0$ and hence $\beta^*>2/3$ using $\beta^*=2/3$ when $k=3/4$ and monotone comparative statics. It follows that $\beta^*>2/3$ for all $\rho$ because $\beta^*$ is continuous in $\rho$ and $L_\beta<0$ when $\beta=2/3$ whereas $L_\beta=0$ when $\beta=\beta^*$.
Since $L_{\beta \rho}>0$ on the domain $\beta>2/3$, monotone comparative statics imply $\beta^*$ is strictly decreasing in $\rho$. \qedhere

\end{enumerate}
\end{proof}

\begin{claim}
As $k\to \infty$, $\bstar \to 0$; as $k\to 0$, $\bstar \to 1$. If $\rho\geq 0$ then $\bstar$ is strictly decreasing in $k$. If $\rho<0$ then $\bstar$ is strictly quasiconcave in $k$, attaining a maximum at some point.
\end{claim}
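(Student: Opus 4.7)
The plan is to handle the two limits first, then apply the implicit function theorem to the first-order condition
$L_\beta(\beta^*(k),k,\rho)=-2(1-\beta^*)+4k^2(\beta^*)^3+2\rho k\beta^*(3\beta^*-2)=0$
for each shape statement. By \autoref{lem:bstarsoln}, $\beta^*\in(0,2)$ and $L_{\beta\beta}(\beta^*)>0$, so $\beta^*(k)$ is continuously differentiable on $(0,\infty)$ and $\sign(d\beta^*/dk)=-\sign(L_{\beta k}(\beta^*,k,\rho))$.

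For the limits: since $\beta^*$ stays in $(0,2)$, sending $k\to 0$ in the FOC forces $-2(1-\beta^*)\to 0$, so $\beta^*\to 1$. Sending $k\to\infty$, if $\beta^*$ stayed bounded away from $0$ along any subsequence, the $4k^2(\beta^*)^3$ term would blow up while the remaining terms are $O(k)$, contradicting the FOC; hence $\beta^*\to 0$.

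For $\rho\geq 0$, I would show $L_{\beta k}(\beta^*,k,\rho)=2\beta^*[4k(\beta^*)^2+\rho(3\beta^*-2)]>0$. Since $L_\beta(1,k,\rho)=4k^2+2\rho k>0$, \autoref{lem:bstarsoln} (which identifies $\beta^*$ as the smallest positive root of $L_\beta$) gives $\beta^*\in(0,1)$. Solving the FOC for $8k(\beta^*)^3$ and substituting rewrites $L_{\beta k}=4(1-\beta^*)/k-2\rho\beta^*(3\beta^*-2)$; the first term is positive because $\beta^*<1$, and the second is non-negative on $\beta^*\leq 2/3$ (since $\rho\geq 0$ and $3\beta^*-2\leq 0$), while on $\beta^*\in(2/3,1)$ both terms in the unsubstituted bracketed form are already non-negative.

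For $\rho<0$, the plan is to track $g(k):=L_{\beta k}(\beta^*(k),k,\rho)$ and show it has exactly one zero. From the limits, $g(k)\to 2\rho<0$ as $k\to 0^+$ (since $\beta^*\to 1$ kills the $4k(\beta^*)^2$ piece), and $g(k)>0$ for large $k$ (since $\beta^*\to 0$ sends $4k(\beta^*)^2+\rho(3\beta^*-2)\to -2\rho>0$, while the prefactor $2\beta^*>0$). The main obstacle is to show that $g$ can only cross zero from below. For this, observe that at any $k_0$ with $g(k_0)=0$ the IFT forces $d\beta^*/dk=0$ at $k_0$, and hence
$g'(k_0)=L_{\beta k\beta}(\beta^*(k_0),k_0,\rho)\cdot 0+L_{\beta kk}(\beta^*(k_0),k_0,\rho)=8(\beta^*(k_0))^3>0$.
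Continuity then forces $g$ to have a unique zero $k_0$, with $g<0$ on $(0,k_0)$ and $g>0$ on $(k_0,\infty)$, so $\beta^*$ strictly increases on $(0,k_0)$ and strictly decreases on $(k_0,\infty)$, yielding strict quasi-concavity with maximum at $k_0$.
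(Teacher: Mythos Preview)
Your proof is correct and follows essentially the same approach as the paper: apply the implicit function theorem to the FOC, determine the sign of $L_{\beta k}$ along $\beta^*(k)$, and for $\rho<0$ use the key observation that $L_{\beta kk}=8(\beta^*)^3>0$ forces $g'>0$ at any zero of $g$, so $g$ changes sign exactly once. One minor nitpick: your claim that $4k(\beta^*)^2+\rho(3\beta^*-2)\to -2\rho$ as $k\to\infty$ presumes $4k(\beta^*)^2\to 0$, which is true but requires showing $k\beta^*$ stays bounded; the conclusion $g(k)>0$ for large $k$ follows more directly from $4k(\beta^*)^2\geq 0$ and $\rho(3\beta^*-2)>0$ once $\beta^*<2/3$.
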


\begin{proof}
The first statement about limits is evident from inspecting \autoref{eq:Ldbeta}, since $L_\beta(\bstar,\cdot)=0$. For the comparative statics, first compute the cross partials
\begin{align}
L_{\beta k}(\beta,k) & = 2\beta[4k\beta^2+\rho(3\beta-2)],\label{eq:Ldbetak}\\
L_{\beta k k}(\beta,k) & = 8 \beta^3.\label{eq:Ldbetakk}
\end{align}
Below we write $\bstar(k)$ to make the dependence on $k$ explicit, and $\bstarprime(k)$ for the derivative. Furthermore, let $\hstar(k)\equiv L_{\beta k}(\bstar(k),k)$, with
\begin{equation}
\hstarprime(k)=L_{\beta k k}(\bstar(k),k)+L_{\beta k \beta}(\bstar(k),k)\bstarprime(k).\label{eq:Lstardk}
\end{equation}
Note that $\sign[\hstar(k)]=-\sign[\bstarprime(k)]$, since 
\begin{equation}
\bstarprime(k)=-\frac{\hstar(k)}{L_{\beta \beta}(\bstar(k),k)},	\label{eq:bstarprime}
\end{equation} 
and \autoref{lem:bstarsoln} established that $L_{\beta \beta}(\bstar(k),k)>0$. 

\underline{Fact}: $\bstar(\cdot)$ is strictly quasiconcave: if $\bstarprime(\hat k)=0$, then $\bstarprimeprime(\hat k)<0$. To prove this, assume $\bstarprime(\hat k)=0$, or equivalently, $\hstar(\hat k)=0$. The derivative of the right-hand side of \eqref{eq:bstarprime} with respect to $k$, evaluated at $\hat k$, has the same sign as $-\hstarprime(\hat k)$ (using $h^*(\hat k)=0$ and $L_{\beta \beta}(\bstar(\hat k),\hat k)>0$). Hence, $\bstarprimeprime(\hat k)<0\iff \hstarprime(\hat k)>0$. The latter inequality follows from \eqref{eq:Lstardk} and \eqref{eq:Ldbetakk}, using $\bstarprime(\hat k)=0$ and $\bstar(\cdot)>0$ (by \autoref{lem:bstarsoln}). $\parallel$

This Fact implies the desired comparative statics as follows:
\begin{enumerate}
	\item \label{rhopos} Assume $\rho\geq 0$. As $k\to 0$, $\bstar(k)\to 1$ and  \eqref{eq:Ldbetak} implies $L_{\beta k}(\bstar(k),k)>0$, hence $\bstarprime(k)<0$. Since $\bstar(\cdot)$ is strictly quasiconcave, $\bstarprime(k)<0$ for all $k$.
	\item Assume $\rho<0$. As $k\to 0$, $\bstar(k)\to 1$ and \eqref{eq:Ldbetak} implies $L_{\beta k}(\bstar(k),k)<0$, hence $\bstarprime(k)>0$. As $k\to \infty$, $\bstar(k)\to 0$. Hence, there is some $\hat k$ at which $\bstarprime(\hat k)=0$. Since $\bstar(\cdot)$ is strictly quasiconcave, $\hat k$ is unique, with $\bstarprime(k)>0$ for $k<\hat k$ and $\bstarprime(k)<0$ for $k>\hat k$. \qedhere
\end{enumerate}

\end{proof}

\begin{claim}
Assume $\rho=0$. There is a unique $\beq$, which is positive. Both $\beq$ and $\bstar/\beq$ are strictly decreasing in $k$. Moreover, $\bstar/\beq\to 1$ as $k\to \infty$ and $\bstar/\beq\to \sqrt[3]{1/2}$ as $k\to 0$.
\end{claim}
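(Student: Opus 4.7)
My plan is to exploit an algebraic coincidence that collapses the analysis to a single cubic equation. With $\rho=0$, \autoref{eq:betahat} gives $\hat\beta(\beta)=1/(1+k^2\beta^2)$, so the fixed-point equation $\beq=\hat\beta(\beq)$ reduces to
\[k^2(\beq)^3+\beq-1=0.\]
The left-hand side is strictly increasing in $\beq$ and equals $-1$ at $\beq=0$, so this cubic has a unique real root, which lies in $(0,1)$. The first-order condition from \autoref{eq:Ldbeta} at $\rho=0$ reads $-2(1-\bstar)+4k^2(\bstar)^3=0$, i.e.,
\[2k^2(\bstar)^3+\bstar-1=0,\]
which is exactly the fixed-point cubic with $k$ replaced by $k\sqrt 2$. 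The key observation driving the rest of the proof is therefore the identity $\bstar(k)=\beq(k\sqrt 2)$, which reduces every subsequent claim to a property of the single function $\beq(\cdot)$.

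For strict monotonicity of $\beq$ in $k$, I would apply the implicit function theorem to the cubic to obtain $\beq'(k)=-2k(\beq)^3/(3k^2(\beq)^2+1)<0$. For the ratio $R(k)\equiv\bstar/\beq=\beq(k\sqrt 2)/\beq(k)$, logarithmic differentiation yields
\[\frac{R'(k)}{R(k)}=\frac{2}{k}\left[\frac{v(k)}{3v(k)+1}-\frac{v(k\sqrt 2)}{3v(k\sqrt 2)+1}\right],\qquad v(k)\equiv k^2\beq(k)^2.\]
Since $v\mapsto v/(3v+1)$ is strictly increasing on $[0,\infty)$, establishing $R'<0$ reduces to showing that $v$ is strictly increasing. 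The cubic identity $k^2(\beq)^3=1-\beq$ rewrites $v(k)=(1-\beq(k))/\beq(k)$, which is strictly increasing in $k$ because $\beq$ is strictly decreasing and $b\mapsto(1-b)/b$ is strictly decreasing on $(0,1)$.

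Finally, the limits come straight from the cubic $k^2(\beq)^3+\beq=1$: as $k\to 0$, $\beq\to 1$, so $\bstar/\beq\to 1$; as $k\to\infty$, $k^2(\beq)^3\to 1$ forces $\beq\sim k^{-2/3}$, so
\[\frac{\bstar}{\beq}=\frac{\beq(k\sqrt 2)}{\beq(k)}\longrightarrow\frac{(k\sqrt 2)^{-2/3}}{k^{-2/3}}=2^{-1/3}=\sqrt[3]{1/2},\]
matching \autoref{prop:compstats}(\ref{compstats3}). The only potentially delicate step is signing $R'/R$; the reformulation via $v$ is the trick that converts an opaque comparison into the two transparent monotonicity facts above.
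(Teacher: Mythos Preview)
Your proof is correct and genuinely different from the paper's. The central novelty is the identity $\bstar(k)=\beq(k\sqrt 2)$, which the paper never observes. Once you have it, everything reduces to the single function $\beq(\cdot)$, and your treatment of the ratio is cleaner: you rewrite $R'/R$ in terms of $v(k)=k^2\beq(k)^2$, use the cubic to get $v=(1-\beq)/\beq$, and read off that $v$ is increasing as the composition of two decreasing maps. The paper instead computes $\mathrm d\bstar/\mathrm dk$ and $\mathrm d\beq/\mathrm dk$ separately, reduces $\beq\,\mathrm d\bstar/\mathrm dk-\bstar\,\mathrm d\beq/\mathrm dk<0$ algebraically to $\beq<\bstar\sqrt 2$, and then establishes that inequality via a continuity-plus-contradiction argument (if $\beq=\bstar\sqrt 2$ the two cubics become inconsistent). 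Amusingly, your condition $v(k)<v(k\sqrt 2)$ is equivalent to the paper's $\beq<\bstar\sqrt 2$ once you substitute $\bstar=\beq(k\sqrt 2)$, so the two routes converge on the same inequality; your proof of it is just more direct.

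One minor note: the limits you derive, $\bstar/\beq\to 1$ as $k\to 0$ and $\bstar/\beq\to\sqrt[3]{1/2}$ as $k\to\infty$, agree with \autoref{prop:compstats}(\ref{compstats3}) and with the paper's own proof of the claim, but the claim as stated has the two limits swapped. You were right to follow \autoref{prop:compstats}.
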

\begin{proof}
Assume $\rho=0$. \autoref{eq:eqcubic} simplifies to
\begin{equation}
\label{eq:beq-rho0}
k^2 (\beq)^3 + \beq -1=0,	
\end{equation}
which has a unique solution, with $\beq\in(0,1)$ strictly decreasing in $k$ with range $(0,1)$.

The first-order condition for $\bstar$ simplifies to
\begin{equation}
\label{eq:bstar-rho0}
2 k^2 (\bstar)^3 + \bstar - 1=0,
\end{equation}
which has a unique solution, also in $(0,1)$ and strictly decreasing in $k$ with range $(0,1)$.

Hence, $\bstar/\beq\to 1$ as $k\to 0$. Moroever, \autoref{eq:beq-rho0} and \autoref{eq:bstar-rho0} imply that as $k\to \infty$, $k^2 (\beq)^3\to 1$ and $2 k^2 (\bstar)^3 \to 1$, and hence $(\bstar/\beq)\to \sqrt[3]{1/2}$.

It remains to prove that $\bstar/\beq$ is strictly decreasing in $k$. Applying the implicit function theorem to \autoref{eq:beq-rho0} and \autoref{eq:bstar-rho0} (which is indeed valid) and doing some algebra,
\begin{align*}
\frac{\mathrm d \bstar}{\mathrm d k}&=-\frac{4 k (\bstar)^3}{6k^2(\bstar)^2+1},\\[5pt]	
\frac{\mathrm d \beq}{\mathrm d k}&=-\frac{2 k (\beq)^3}{3k^2(\beq)^2+1}.
\end{align*}
The ratio $\bstar/\beq$ is strictly decreasing in $k$ if and only if $\beq \frac{\mathrm d \bstar}{\mathrm d k}-\bstar \frac{\mathrm d \beq}{\mathrm d k}<0$. Substituting in the formulae above, this inequality is equivalent to
\begin{align*}
&\frac{2 k (\beq)^3\bstar}{3k^2(\beq)^2+1}<\frac{4 k (\bstar)^3 \beq}{6k^2(\bstar)^2+1}\\[5pt]
\iff & \left(6k^2(\bstar)^2+1\right)(\beq)^2 < \left(3k^2(\beq)^2+1\right)2(\bstar)^2\\[5pt]
\iff & \beq < \bstar \sqrt{2}.
\end{align*}
Plainly, the last inequality holds as $k\to 0$ because both $\beq\to 1$ and $\bstar\to 1$ as $k \to 0$. By continuity, we are done if there is no $k$ at which $\beq=\bstar \sqrt{2}$. Indeed there is not because then \autoref{eq:beq-rho0} would become equivalent to
$$2 k^2 (\bstar)^3+\bstar-1/\sqrt{2}=0,$$
contradicting \autoref{eq:bstar-rho0}.
\end{proof}

\subsection{Proof of \autoref{prop:compstatswelfare}}

Recall $L(\beta,k,\rho)$ defined in \eqref{eq:L} and that $k\equiv m\sigma_\gamma/\sigma_\eta>0$. As explained before \eqref{eq:L}, the welfare loss at $\beta$ is $\sigma^2_\eta L(\beta,k,\rho)$. Thus, the welfare loss' comparative statics in $\sigma_\gamma$, $m$, and $\rho$ are given by those of $L(\beta^*,k,\rho)$ in $k$ and $\rho$. Although $\bstar$ depends on $k$ and $\rho$, the envelope theorem implies that $$\frac{\mathrm dL(\bstar,k,\rho)}{\mathrm d k}=L_k(\bstar,k,\rho) \text{ \ and \ } \frac{\mathrm dL(\bstar,k,\rho)}{\mathrm d \rho}=L_\rho(\bstar,k,\rho).$$

Hence, the proposition's comparative statics in $\sigma_\eta$, $m$, and $\rho$ follow from:

\begin{lemma}
\label{lem:dLdk}
$L_{k}(\bstar,k,\rho)>0$. If $\rho\geq 0$, then $L_{\rho}(\bstar,k,\rho)<0$.
\end{lemma}

\begin{proof}
The second statement follows from the fact that $\bstar\in (0,1)$ when $\rho\geq 0$ (\autoref{rem:bstar<1}), and the computation $$L_{\rho}(\bstar,k,\rho)=-2(\bstar)^2(1-\bstar)k.$$
So we are left to prove $L_{k}(\bstar,k,\rho)>0$. Compute $L_{k}(\bstar,k,\rho)=2(\bstar)^2\left(k(\bstar)^2-\rho(1-\bstar)\right).$ Letting
$$
\xi(\beta,k,\rho)\equiv k \beta ^2-\rho(1-\beta),	
$$
the fact that $\bstar>0$ implies that $L_k(\bstar,k,\rho)>0$ is equivalent to
\begin{equation}
\xi(\bstar,k,\rho)>0.\label{eq:dLdk_step}
\end{equation}

Since $\xi(\beta,k,\rho)$ is a convex quadratic in $\beta$, \eqref{eq:dLdk_step} holds if $\xi(\cdot,k,\rho)$ has no real roots. So consider the case that the real roots $\frac{-\rho\pm \sqrt{\rho^2+4 k \rho}}{2k}$ exist, i.e., 
\begin{equation}
\rho(4k+\rho)\geq 0.\label{eq:dLdk_realroots}	
\end{equation}
 Then 
 \begin{equation}
 \label{eq:dLdk_bstarthresh}
 \eqref{eq:dLdk_step}  \iff \bstar \notin \left[\frac{-\rho - \sqrt{\rho^2+4 k \rho}}{2k},\frac{-\rho+\sqrt{\rho^2+4 k \rho}}{2k}\right].
 \end{equation}
  Evaluating \autoref{eq:Ldbeta} at $\beta=\frac{-\rho\pm \sqrt{\rho^2+4 k \rho}}{2k}$ and simplifying yields
\begin{align}
L_\beta\left(\frac{-\rho - \sqrt{\rho^2+4 k \rho}}{2k},k,\rho\right)&=-\frac{\left(1-\rho^2\right) \left(2 k+\rho+\sqrt{\rho (4 k+\rho)}\right)}{k}\label{eq:dLdk_step2},	\\
L_\beta\left(\frac{-\rho + \sqrt{\rho^2+4 k \rho}}{2k},k,\rho\right)&=-\frac{\left(1-\rho^2\right) \left(2 k+\rho-\sqrt{\rho (4 k+\rho)}\right)}{k}.\label{eq:dLdk_step3}
\end{align}
Note that
\begin{equation}
(2k+\rho)^2=4k^2+4k\rho +\rho^2>\rho (4 k+\rho).\label{eq:dLdk_step4}
\end{equation}

Consider $\rho\geq 0$. Then $2k+\rho>0$, which combines with \eqref{eq:dLdk_step4} and $\rho^2<1$ to imply \eqref{eq:dLdk_step3} is negative. Since $L_\beta(\cdot,k,\rho)$ has only one positive root when $\rho\geq 0$ (see Step 5 in the proof of \autoref{lem:bstarsoln}), and this root is $\bstar$, it follows that $\bstar>\frac{-\rho+\sqrt{\rho^2+4 k \rho}}{2k}$. So \eqref{eq:dLdk_bstarthresh} implies that \eqref{eq:dLdk_step} holds.

Consider $\rho<0$. Then \eqref{eq:dLdk_realroots} implies $2k+\rho<4k+\rho\leq 0$, which combines with \eqref{eq:dLdk_step4} and $\rho<1$ to imply that \eqref{eq:dLdk_step2} is positive. Hence, since $\bstar$ is the first nonnegative point at which $L_\beta(\cdot,k,\rho)\geq 0$ (\autoref{lem:bstarsoln} part \ref{bstarfirstzero}), $\bstar<\frac{-\rho-\sqrt{\rho^2+4 k \rho}}{2k}$. So \eqref{eq:dLdk_bstarthresh} implies that \eqref{eq:dLdk_step} holds.
\end{proof}

It remains to show the comparative statics in $\sigma_\eta$. As explained before \eqref{eq:L}, the welfare loss is
$$\welfloss (\beta,\sigma_\eta)=(1-\beta)^2 \vareta +m^2 \beta^4 \vargamma - 2 (1-\beta) m \beta^2 \rho \sigma_\eta \sigma_\gamma.$$
Even though $\bstar$ depends on $\sigma_\eta$, the envelope theorem implies $\frac{\mathrm d \welfloss (\bstar,\sigma_\eta)}{\mathrm d \sigma_\eta} = \welfloss_{\sigma_\eta} (\bstar,\sigma_\eta)$. Evaluating this partial derivative, and substituting in $k\equiv m\sigma_\gamma/\sigma_\eta$ and 
\begin{equation}
\zeta(\beta)\equiv (1-\beta)- k \beta^2\rho,	\label{eq:dwelddeta_zeta}
\end{equation}
we compute
\begin{equation}
{\welfloss_{\sigma_\eta} (\bstar,\sigma_\eta)}=2 \sigma_\eta (1-\bstar) \zeta(\bstar).	\label{eq:dwelfdeta}
\end{equation}

The comparative statics in $\sigma_\eta$ are now directly implied by the following two claims.

\begin{claim}
\label{claim:dwelddeta_1}
If $\rho\geq 0$, then \eqref{eq:dwelfdeta} is positive.
\end{claim}

\begin{proof}
Assume $\rho\geq 0$. Then $\bstar \in (0,1)$ (\autoref{rem:bstar<1}), so it is sufficient to establish that $\zeta(\bstar)>0$. This inequality is immediate from \eqref{eq:dwelddeta_zeta} if $\rho=0$, so suppose $\rho>0$. From \eqref{eq:dwelddeta_zeta}, $\rho>0$ makes the function $\zeta$ a concave quadratic in $\beta$ that is positive at $\beta=0$ and whose only positive root is $\frac{-1+\sqrt{4 k \rho +1}}{2 k \rho }$. Since $\bstar$ is the first nonnegative point at which $L_\beta(\cdot,k,\rho)\geq 0$ (\autoref{lem:bstarsoln} part \ref{bstarfirstzero}), it is sufficient to show that 
$$L_\beta\left(\frac{-1+\sqrt{4 k \rho +1}}{2 k \rho },k,\rho\right)>0.$$
Evaluating \autoref{eq:Ldbeta} at $\beta=\frac{-1+\sqrt{4 k \rho +1}}{2 k \rho }$ and simplifying yields
\begin{equation}
\label{eq:dweldeta_step}
L_\beta\left(\frac{-1+\sqrt{4 k \rho +1}}{2 k \rho },k,\rho\right)=\frac{2 \left(1-\rho ^2\right) \left(k \rho  \left(\sqrt{4 k \rho +1}-3\right)+\sqrt{4 k \rho +1}-1\right)}{k \rho ^3},
\end{equation}
which is positive because $\rho>0$.\footnote{As $\rho \in (0,1)$, the sign of the right-hand side of 
\autoref{eq:dweldeta_step} is the same as that of its numerator's expression {$k \rho  \left(\sqrt{4 k \rho +1}-3\right)+\sqrt{4 k \rho +1}-1$}. This expression would equal $0$ were $k=0$, and its derivative with respect to $k$ is positive at any $k>0$.}
\end{proof}

\begin{claim}
Suppose $\rho< 0$. If $2k<-\rho$, then \eqref{eq:dwelfdeta} is negative; if $2k>-\rho$, then \eqref{eq:dwelfdeta} is positive.
\end{claim}

\begin{proof}
Assume $\rho<0$. We first recall from \autoref{rem:bstar<1} that $2k>-\rho \iff \bstar<1$ and $2k<-\rho\iff \bstar>1$.

So first suppose $\bstar \in (0,1)$. Then the sign of \eqref{eq:dwelfdeta} is the same as that of $\zeta(\bstar)$, which from \eqref{eq:dwelddeta_zeta} is evidently positive when $\bstar \in (0,1)$ and $\rho<0$.

Now suppose $\bstar>1$. Then the sign of \eqref{eq:dwelfdeta} is that of $-\zeta(\bstar)$. From \eqref{eq:dwelddeta_zeta}, $\rho<0$ makes the function $-\zeta$ a concave quadratic in $\beta$ that is negative at $\beta=0$ and either has no real roots or has roots $\frac{-1\pm \sqrt{4 k \rho +1}}{2 k \rho }$. 
Since $-\zeta$ is globally negative if it has no real roots, assume otherwise, i.e., $4k\rho +1\geq 0$. Since $\bstar$ is the first nonnegative point at which $L_\beta(\cdot,k,\rho)\geq 0$ (\autoref{lem:bstarsoln} part \ref{bstarfirstzero}), $-\zeta(\bstar)<0$ if
$$L_\beta\left(\frac{-1-\sqrt{4 k \rho +1}}{2 k \rho },k,\rho\right)>0.$$
Evaluating \autoref{eq:Ldbeta} at $\beta=\frac{-1-\sqrt{4 k \rho +1}}{2 k \rho }$ and simplifying yields
\begin{equation}
\label{eq:dweldeta_step2}
L_\beta\left(\frac{-1-\sqrt{4 k \rho +1}}{2 k \rho },k,\rho\right)=-\frac{2 \left(1-\rho ^2\right) \left((1+k \rho) \sqrt{4 k \rho +1}+1+3k \rho\right)}{k \rho ^3}
\end{equation}
which is positive because $\rho<0$.\footnote{As $\rho \in (-1,0)$, the right-hand side of 
\autoref{eq:dweldeta_step2} is the same as that of its numerator's expression $(1+k \rho) \sqrt{4 k \rho +1}+1+3k \rho$. This expression is positive because $\rho<0$ and $1+4k\rho\geq 0$ imply $1+k\rho>1+3k\rho>1+4k\rho\geq 0$.}
\end{proof}

\subsection{Proof of \autoref{prop:hybrid}}

Throughout this proof, we denote $\tau \equiv (1-\kappa)\eta+\kappa \gamma$, $\mu_{\tau} \equiv \E[\tau] = (1-\kappa) \mu_\eta + \kappa \mu_\gamma$, and $\sigma_\tau\equiv \sqrt{\Var(\tau)}=\sqrt{(1-\kappa)^2\vareta+\kappa^2\vargamma}$. 

We begin by proving the statement about fixed points. 

\begin{lemma}
\label{lem:hybridfp}
Assume $\rho=0$ and designer welfare \eqref{eq:hybridobj}. Among fixed points with positive sensitivity, there is a unique one.	
\end{lemma}

\begin{proof}
As explained in \autoref{sec:hybrid}, $\hat \beta(\beta)$ is the OLS regression coefficient $\Cov(x,\tau)/\Var(x)$. Since $\eta$ and $\gamma$ are uncorrelated,
\begin{align}
	\Cov(x,\tau)&=\Cov(\eta+m\beta\gamma,(1-\kappa)\eta+\kappa \gamma)=(1-\kappa)\sigma^2_\eta+m\beta\kappa\sigma^2_\gamma,\label{eq:cov_hybrid}\\
	\Var(x)&=\Var(\eta+m\beta\gamma)=\sigma^2_\eta+(m\beta)^2\sigma^2_\gamma.\label{eq:var_hybrid}
\end{align}
A fixed point $(\beta,\beta_0)$ satisfies $\hat \beta(\beta)=\beta$, which can be rewritten as the cubic equation
\begin{equation}
\label{eq:eqcubichybrid}
m^2\beta^3\vargamma+(\vareta-m\kappa\vargamma)\beta-(1-\kappa)\vareta=0.	
\end{equation}
The left-hand side of \eqref{eq:eqcubichybrid} is continuous, negative at $\beta=0$, and tends to $\infty$ as $\beta\to \infty$. There is a positive solution to \eqref{eq:eqcubichybrid} by the intermediate value theorem. The positive solution is unique because differentiation shows that the left-hand side of \eqref{eq:eqcubichybrid} is strictly convex on $\beta>0$.
\end{proof}
\begin{remark}
\label{rem:eqcubichybrid}
For subsequent use, note that when $\beq$ denotes the (unique) positive solution to \autoref{eq:eqcubichybrid}, the left-hand side of \eqref{eq:eqcubichybrid} is negative if and only if $\beta<\beq$, and positive if and only if $\beta>\beq$.	
\end{remark}

\paragraph{The designer's problem.} We next state explicitly and simplify the designer's problem. An optimal $(\bstar,\bstar_0)$ solves
 \[
 \min_{(\beta,\beta_0)\in \Reals^2} \E\left[ \left(\beta(\eta+m \beta \gamma) + \beta_0- \tau \right)^2 \right]
 \]
The first-order condition with respect to $\beta_0$ implies
 \[
 \beta^*_0 = \mu_\tau - \beta \mu_\eta - m \beta^2 \mu_\gamma.
 \]
 Substituting $\beta^*_0$ into the objective, the designer chooses $\beta$ to minimize
 \begin{align*}
   & \E\left[ \left(\beta(\eta-\mu_\eta) +m \beta^2 (\gamma-\mu_\gamma) -(\tau-\mu_\tau) \right)^2 \right]\\
   = & (\beta-(1-\kappa))^2 \sigma_\eta^2 + (m \beta^2 - \kappa)^2 \sigma_\gamma^2  \\
   = & \sigma_\eta^2 \left[ (\beta-(1-\kappa))^2 + (m \beta^2 - \kappa)^2 l^2 \right],
 \end{align*}
where the first equality uses $\E\left[(\eta-\mu_\eta)(\gamma-\mu_\gamma)\right]=0$, $\E\left[(\eta-\mu_\eta)(\tau-\mu_\tau)\right]=(1-\kappa)\sigma^2_\eta$, and $\E\left[(\gamma-\mu_\gamma)(\tau-\mu_\tau)\right]=\kappa \sigma^2_\gamma$, and in the second equality
    \[
    l \equiv {\sigma_\gamma}/{\sigma_\eta}>0.
    \]
Equivalently, for $l>0$, the designer minimizes
$$L(\beta) \equiv (\beta-(1-\kappa))^2 + (m \beta^2 - \kappa)^2 l^2 .$$

\paragraph{Unique optimum.} We establish that there is a unique minimizer of $L(\beta)$, $\bstar>0$.

\underline{Step 1:} Any minimizer of $L(\beta)$ is positive.

Algebra shows that $L(\beta)-L(-\beta)=-4\beta(1-\kappa)$, any hence any optimum is nonnegative. Moreover $L'(\beta)= 2(\beta-(1-\kappa))+4 \beta  l^2 m \left(\beta ^2 m-\kappa \right)$, which 
is negative at $\beta=0$.

\underline{Step 2:} $L'(\beta)=0$ has a unique solution on $\beta>0$. Hence, $L(\beta)$ has a unique local minimizer on $\beta>0$ and this is the unique global minimizer (over $\beta\in \Reals$).

We compute $L'''(\beta)=24l^2m^2 \beta$, which is positive for $\beta>0$. That is, $L'(\beta)$ is strictly convex on $\beta>0$. Inspecting $L'$ computed in Step 1, we see that $L'(0)<0$ and $L'(\beta)\to \infty$ as $\beta\to \infty$. Hence, $L'(\beta)=0$ has a unique solution on $\beta>0$.

\paragraph{Comparison of $\bstar$ and $\beq>0$.} We claim that at the fixed point $(\beq,\beq_0)$ with $\beq>0$:
\begin{align*}
\kappa<\bar \kappa &\implies L'(\beq)>0,\\
\kappa = \bar \kappa &\implies L'(\beq)=0,\\
\kappa>\bar \kappa &\implies L'(\beq)<0.
\end{align*}
Given Step 2 above, this claim proves \autoref{prop:hybrid}.

To prove the claim, we begin with a decomposition analogous to our main analysis (see, in particular, the \hyperref[proof:lossderivbeq]{proof of} \autoref{lem:lossderivbeq}):
$$\vareta L(\beta) = \underbrace{\sigma^2_\tau\left(1-\Corr(x,\tau)^2\right)}_{\text{Info loss from linearly estimating $\tau$ using $x$}}+\underbrace{(\beta-\hat \beta(\beta))^2\var(x)}_{\text{Misallocation loss given linear estimation}},$$
where $\Corr(x,\tau)=\hat\beta(\beta)\sqrt{\Var(x)}/\sigma_\tau$, 
$\hat\beta(\beta)=\Cov(x,\tau)/\Var(x)$ and the formulas for $\Cov(x,\tau)$ and $\Var(x)$ are given in \autoref{eq:cov_hybrid} and \autoref{eq:var_hybrid}.

Differentiating,
\begin{align*}
\vareta L'(\beta) &= \overbrace{- 2 \sigma^2_\tau \Corr(x,\tau) \frac{\mathrm d }{\mathrm d \beta} \Corr(x,\tau)}^{\text{Marginal change in info loss}} \\ &\qquad + \underbrace{\left( -2 (\beta - \hat \beta(\beta)) \hat \beta'(\beta) \var(x) +   (\beta - \hat \beta(\beta))^2 \frac{\mathrm d}{\mathrm d \beta} \var(x)\right)}_{\text{Marginal change in misallocation loss}}.
\end{align*}
Let us evaluate this expression at $\beta=\beq>0$. 
Since $\beq=\hat \beta(\beq)$, the marginal change in misallocation loss is evidently zero. 
Thus,
\begin{align*}
\sign L'(\beq) &= \left.- \left(\sign \Corr(x,\tau)\right) \left( \sign \frac{\mathrm d }{\mathrm d \beta} \Corr(x,\tau)\right)\right|_{\beta=\beq}= \left.- \sign \frac{\mathrm d }{\mathrm d \beta} \Corr(x,\tau)\right|_{\beta=\beq},
\end{align*}
where the second equality is because $\sign \left.\Corr(x,\tau)\right|_{\beta=\beq}=\sign \hat \beta(\beq)=\sign \beq>0$.
Dividing \autoref{eq:cov_hybrid} by $\sigma_\tau\sqrt{\Var(x)}$ and differentiating,
$$\frac{\mathrm d }{\mathrm d \beta} \Corr(x,\tau)=\frac{m \vareta \vargamma (\kappa -\beta  (1-\kappa ) m)}{\sigma_\tau \Var(x)^{3/2}}.$$
Thus,
\begin{align*}
\sign L'(\beq)&=\sign\left(\beq(1-\kappa)m-\kappa\right)=-\sign \phi\left(\frac{\kappa}{m(1-\kappa)}\right),
\end{align*}
where $\phi(\beta)$ is the left-hand side of \autoref{eq:eqcubichybrid}; see \autoref{rem:eqcubichybrid}.
Some algebra shows that $$\phi\left(\frac{\kappa}{m(1-\kappa)}\right)=-\frac{\left( m (1-\kappa)^2-\kappa \right) \left((1-\kappa)^2 \vareta+\kappa ^2 \vargamma\right)}{(1-\kappa)^3 m},$$
and thus
\begin{align*}
\sign L'(\beq)&=\sign \left( m (1-\kappa)^2-\kappa \right).
\end{align*}
The argument of the sign function of this equality's right-hand side is a quadratic in $\kappa$ that is positive and decreasing at $\kappa=0$ and negative at $\kappa=1$. The quadratic's only root in $(0,1)$ is $(1 + 2 m - \sqrt{1 + 4 m})/(2 m)$, i.e., $\bar \kappa$. The claim follows.


\vspace{1in}

\bibliographystyle{ecta}
\bibliography{Frankel-Kartik}


\newpage


\noindent {\LARGE \textbf{Supplementary (Online) Appendices}}

\section{Monotone Comparative Statics}
The following fact on monotone comparative statics is used in the proof of \autoref{prop:main} and in the proof of \autoref{prop:compstats}. Although it is well known, we include a proof.
\begin{fact}
\label{MCS}
Let $T\subseteq \Reals$, $Z\subseteq \Reals$ be open, and $f:Z  \times T\to \Reals$ be continuously differentiable in $z$ with for all $t\in T$, $\argmin_{z\in Z} f(z,t)\neq \emptyset$. Define $M(t) \equiv \argmin_{z\in Z} f(z,t)$. For any $\bar t\in T$ and $\underline t \in T$ with $\bar t> \underline t$, it holds that:\reducespace
\begin{enumerate}
\item If $f_{z}(z,\bar t)>f_z(z,\underline t)$ for all $z\in Z$, then for any $\bar m \in M(\bar 
t)$ and any $\underline m \in M(\underline t)$ it holds that $\bar m<\underline m$.

\underline{Proof}: For any $\hat z>\underline m$,
$$f(\hat z,\o t)-f(\underline m,\o t)=\int_{\underline m}^{\hat z} f_z(z,\o t)\mathrm d z>\int_{\u m}^{\hat z} f_z(z,\underline t)\mathrm d z = f(\hat z,\underline t)-f(\underline m,\underline t)\geq 0.$$
Hence $\overline m \leq \underline m$. The inequality must be strict because otherwise the first-order conditions yield $0=f_z(\o m,\o t)=f_z(\underline m,\overline t)>f_z(\underline m,\underline t)=0$.
\item If $f_{z}(z,\bar t)<f_z(z,\underline t)$ for all $z\in Z$, then for any $\bar m \in M(\bar 
t)$ and any $\underline m \in M(\underline t)$ it holds that $\bar m>\underline m$.
(We omit a proof, as it is analogous to that above.) \hfill \textsquare
\end{enumerate}
\end{fact}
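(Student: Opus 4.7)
My plan is to prove Part 1 directly and obtain Part 2 by an analogous argument (swapping the roles of $\overline t$ and $\underline t$, or equivalently applying Part 1 to the transformed function $\tilde f(z,t) := f(z,-t)$). The core idea is to reduce the comparison of minimizers to a pointwise comparison of partial derivatives via the fundamental theorem of calculus, and then promote the weak conclusion to a strict one using interior first-order conditions.

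For Part 1, the first step is to establish the weak inequality $\overline m \le \underline m$. I would fix an arbitrary candidate $\hat z \in Z$ with $\hat z > \underline m$ and show that $\hat z$ cannot belong to $M(\overline t)$. Since $f$ is $C^1$ in $z$ and $Z$ is open, the fundamental theorem of calculus yields
\[
f(\hat z,\overline t) - f(\underline m,\overline t) \;=\; \int_{\underline m}^{\hat z} f_z(z,\overline t)\,\mathrm d z
\quad\text{and}\quad
f(\hat z,\underline t) - f(\underline m,\underline t) \;=\; \int_{\underline m}^{\hat z} f_z(z,\underline t)\,\mathrm d z.
\]
The strict pointwise hypothesis $f_z(z,\overline t) > f_z(z,\underline t)$ makes the first integral strictly larger than the second, and because $\underline m \in M(\underline t)$ the second integral is nonnegative. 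Chaining the two gives $f(\hat z,\overline t) > f(\underline m,\overline t)$, so no point strictly above $\underline m$ can minimize $f(\cdot,\overline t)$; hence $\overline m \le \underline m$.

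The second step is to rule out the equality case $\overline m = \underline m$. Here is where openness of $Z$ and continuous differentiability are doing real work: any minimizer of $f(\cdot,t)$ is interior to $Z$ and so satisfies the first-order condition $f_z(\cdot,t)=0$. If $\overline m = \underline m$, then evaluating the two FOCs at this common point would give $f_z(\overline m,\overline t) = 0 = f_z(\overline m,\underline t)$, flatly contradicting $f_z(z,\overline t) > f_z(z,\underline t)$ at $z = \overline m$. Part 2 is immediate from the same template with the inequality reversed. I do not anticipate a serious obstacle: the one point that deserves a little care is the strictness step, since the integration argument alone only delivers the weak conclusion, and it is the openness of $Z$ together with $C^1$ regularity that licenses the clean FOC contradiction.
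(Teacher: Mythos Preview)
Your proposal is correct and mirrors the paper's proof essentially line for line: the paper also fixes $\hat z>\underline m$, applies the fundamental theorem of calculus, compares the two integrals using the strict pointwise hypothesis and the optimality of $\underline m$ to get $\overline m\le \underline m$, and then upgrades to strictness via the first-order conditions at an interior minimizer. For Part 2 the paper likewise omits the argument as analogous.
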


\section{Alternative Model of Information Loss}
\label{app:binary}

Our paper finds that a designer improves information, and thereby allocation accuracy, by flattening a fixed point rule. We developed this point in what we believe is a canonical model of information loss from manipulation, one used in a number of other papers. But we think the point applies more broadly, including in other models of information loss.  For instance, even a model with a one-dimensional type (such as the model in this paper with no heterogeneity on the gaming ability $\gamma$) can lead to information loss when there is a bounded action space and strong manipulation incentives. The reason is ``pooling at the top''. We establish below a version of our main result for a simple model in this vein.

Let the agent take action $x \in \{0,1\}$ with natural action $\eta \in \{0,1\}$.  The agent's type $\eta$ is her private information, drawn with ex-ante probability $\pi\in (0,1)$ that $\eta=1$. After observing $x$, the designer chooses allocation $y\in \Reals$ with payoff $-(y-\eta)^2$. We assume, for simplicity, that the agent of type $\eta=1$ must choose $x=1$.\footnote{Our main point goes through so long as action $x=1$ is no more costly than $x=0$ for type $\eta=1$, as this will ensure it is optimal for type $\eta=1$ to choose $x=1$.} The payoff for type $\eta=0$ is $y - cx$, where $c>0$ is a commonly known parameter. To streamline the analysis, we assume $c \in (0,\pi)$.

A pure allocation rule or policy is $Y:\{0,1\}\to \Reals$. Due to the designer's quadratic loss payoff, it is without loss to focus on pure policies. Given a policy $Y$, let $\Delta\equiv Y(1)-Y(0)$ be the difference in allocations across the two actions of the agent. We focus, without loss, on policies with $\Delta\ge 0$. A policy with a smaller $\Delta$ is a ``flatter'' policy, i.e., it is less sensitive to the agent's action. The naive policy $Y^\mathrm{n}$ sets $Y^\mathrm{n}(1)=1$ and $Y^\mathrm{n}(0)=0$, corresponding to a naive allocation difference of $\Delta^n=1$. Let $\Delta^{\mathrm{fp}}$ and $\Delta^*$ denote the corresponding differences from fixed point and commitment policies.

\subsection{Naive Policy}

Take any policy with $\Delta=1$. Since we assume $c<\pi<1$, even the agent with $\eta=0$ will then choose $x=1$. So welfare---the designer's ex-ante expected payoff---from the naive policy is 
$$-\pi(0-0)^2 -(1-\pi)(1-0)^2=-(1-\pi).$$

\subsection{Fixed Point}
At a Bayesian Nash equilibrium (of either the simultaneous move game, or when the agent moves first), $Y(x)=\mathbb{E}[\eta|x]$ for any $x$ on the equilibrium path. If $x=0$ is on the equilibrium path, $Y(0)=0$ because type $\eta=1$ does not play $x=0$. 

There is a fully-pooling equilibrium with both types playing $x=1$: the designer plays $Y(1)=\pi$ and $Y(0)=0$, and it is optimal for type $\eta=0$ to play $x=1$ because $c<\pi$. The corresponding welfare is
$$-\pi(\pi-1)^2-(1-\pi)(\pi-0)^2=-\pi(1-\pi).$$

There is no equilibrium in which the agent of type $\eta=0$ puts positive probability on action $x=0$, because that would imply $Y(1)>\pi$ and $Y(1)=0$, against which the agent's unique best response is to play $x=1$.

Therefore, we have identified the (essentially unique, up to the off-path allocation following $x=0$) fixed point policy: $Y^\mathrm{fp}(1)=\pi$, $Y^\mathrm{fp}(0)=0$, and therefore $\Delta^{\mathrm{fp}}=\pi$. The agent pools on $x=1$, and welfare is $-\pi(1-\pi)$.\footnote{\label{fn:offpath}The choice of $Y^\mathrm{fp}(0)=0$ can be justified from the perspective of the agent ``trembling''. In particular, in the signaling game where the agent moves before the designer, any sequential equilibrium \citep{KW82} has $Y(0)=0$, as only type $\eta=0$ can play $x=0$. But note that no matter how $Y(0)$ is specified, it must hold in a fixed point that $\Delta \leq  c$; otherwise the agent will not pool at $x=1$.} This welfare is larger than that of the naive policy.

\subsection{Commitment}
Now suppose the designer's commits to a policy before the agent moves. From the earlier analysis, if $\Delta >c$ the agent will pool at $x=1$ and so an optimal such policy is the fixed point policy $Y^\mathrm{fp}$.
For any $\Delta<c$, there is full separation: the agent's best response is $x=\eta$. Indeed, full separation is also a best response for the agent when $\Delta=c$. Given that the designer wants to match the agent's type, it follows that the optimal way to induce full separation  is to set $\Delta=c$ (or $\Delta = c^-$), i.e., have $Y^*(1)=Y^*(0)+c$. 

At such an optimum, quadratic loss utility implies that the designer sets an average action of $(1-\pi) Y^*(0) + \pi Y^*(1)$ equal to $\E[\eta]=\pi$. Plugging in $Y^*(1) = Y^*(0) + c$ yields
\[(1-\pi) Y^*(0) + \pi (Y^*(0)+c) = \pi,\]
and hence the solution
\[Y^*(0) = \pi (1-c), \quad Y^*(1) = \pi (1-c) + c.\]

The corresponding welfare is 
\begin{align*}
- (1-\pi)( \pi(1-c)-0)^2 - \pi (\pi(1-c)+c-1)^2 &=  - (1-c)^2 (1-\pi)\pi.
\end{align*}

This welfare is larger than that under the fixed point. Moreover, the optimal policy has $\Delta^*=c$ while the fixed point has $\Delta^\mathrm{fp}=\pi$ and the naive policy has $\Delta^\mathrm{n}=1$. Thus the optimal policy is flatter than the fixed point, which in turn is flatter than the naive policy: 
$$\Delta^* < \Delta^{\mathrm{fp}} < \Delta^\mathrm{n}.$$

Note that the designer obtains no benefit from reducing $\Delta$ from $\Delta^{\mathrm{fp}}=\pi$ until reaching $\Delta^*=c$; this is an artifact of the assumption that there is no heterogeneity in the manipulation cost $c$. In a model with such heterogeneity, there would be a more continuous benefit of reducing $\Delta$ from the fixed point.
\end{document}